%\documentclass[final,onefignum,onetabnum]{siamart190516}

%% ------------------------------------------------------------------
%% Code used in examples, needed to reproduce
%% ------------------------------------------------------------------
%% Used for \set, used in an example below
%\usepackage{braket,amsfonts}

%% Used in table example below
%\usepackage{array}
%\usepackage{amsmath,amssymb}
\documentclass[11pt]{article}
\setlength{\textwidth}{17cm} \setlength{\textheight}{23cm}
\oddsidemargin=-0.3cm \topmargin=-1.5cm

\usepackage{amsthm,amsmath,amssymb}
\usepackage{subcaption}
\usepackage[usenames,dvipsnames]{color}
\usepackage[pdftex,breaklinks,colorlinks,
citecolor={BlueViolet}, linkcolor={Blue},urlcolor=Maroon]{hyperref}
\usepackage{tikz,pgfkeys}
\usetikzlibrary{arrows.meta}
\usetikzlibrary{quotes}
\usetikzlibrary{decorations.pathreplacing}
\usepackage{graphicx}
\usepackage{charter,eulervm}%
\usepackage{multirow,booktabs,array}
\usepackage{tabularx,enumerate}
\usepackage[final,expansion=alltext,protrusion=true]{microtype}
\usepackage[left]{lineno} %, modulo
%\linenumbers

\theoremstyle{plain} %remark,
\newtheorem{theorem}{Theorem}[section]
\newtheorem{lemma}[theorem]{Lemma}
\newtheorem{corollary}[theorem]{Corollary}

\newtheorem{observation}[theorem]{Observation}

\renewcommand{\gg}{G_t}
\renewcommand{\ss}{R}   %%% R-extendalbe

%% For figures
\usepackage{graphicx}

\title{Cycle Extendability of Hamiltonian Strongly Chordal Graphs}
\author{Guozhen Rong\thanks{School of Computer Science and Engineering, Central South University, Changsha, China. \href{mailto:rongguozhen@csu.edu.cn} {\tt rongguozhen@csu.edu.cn, jxwang@csu.edu.cn}}
	\and Wenjun Li\thanks{Hunan Provincial Key Laboratory of Intelligent Processing of Big Data on Transportation, Changsha University of Science and Technology, Changsha, China. \href{mailto:lwjcsust@csust.edu.cn} {\tt lwjcsust@csust.edu.cn}}
	\and
	Jianxin Wang\footnotemark[1]
	\and
	Yongjie Yang\thanks{Faculty of Human and Business Sciences, Saarland University, Saarbr\"{u}cken, Germany. \href{yyongjiecs@gmail.com} {\tt yyongjiecs@gmail.com}}
}

%% Custom SIAM macro to insert headers
%\headers{Cycle Extendability of Hamiltonian Strongly Chordal Graphs}
%\end{tcbverbatimwrite}
%\input{tmp_\jobname_header.tex}

%% Optional: Set up PDF title and authors
%\ifpdf
%\hypersetup{ pdftitle={Guide to Using  SIAM'S \LaTeX\ Style} }
%\fi

%% ------------------------------------------------------------------
%% END HEADING INFORMATION
%% ------------------------------------------------------------------

%% ------------------------------------------------------------------
%% MAIN Document
%% ------------------------------------------------------------------
\begin{document}
\maketitle

%% ------------------------------------------------------------------
%% ABSTRACT
%% ------------------------------------------------------------------
%\begin{tcbverbatimwrite}{tmp_\jobname_abstract.tex}
\begin{abstract}
 	In 1990, Hendry conjectured that all Hamiltonian chordal graphs are cycle extendable. After a series of papers confirming the conjecture for a number of graph classes, the conjecture is yet refuted by Lafond and Seamone in 2015. Given that their counterexamples are not strongly chordal graphs and they are all only $2$-connected, Lafond and Seamone asked the following two questions: (1) Are Hamiltonian strongly chordal graphs cycle extendable? (2) Is there an integer~$k$ such that all $k$-connected Hamiltonian chordal graphs are cycle extendable?
		Later, a conjecture stronger than Hendry's is proposed. In this paper, we resolve all these questions in the negative.
		On the positive side, we add to the list of cycle extendable graphs two more graph classes, namely, Hamiltonian $4$-\textsc{fan}-free chordal graphs where every induced $K_5 - e$ has true twins, and Hamiltonian  $\{4\textsc{-fan}, \overline{A} \}$-free chordal graphs.
\end{abstract}

%\end{tcbverbatimwrite}
%\input{tmp_\jobname_abstract.tex}
%% ------------------------------------------------------------------
%% END HEADER
%% ------------------------------------------------------------------
%\input{../hyphenation.tex}

\section{Introduction}
	\label{sec-introduction}
	A graph is \emph{Hamiltonian} if it has a {\emph{Hamiltonian cycle}}.
	Investigating sufficient conditions for the existence of a Hamiltonian cycle has been a prevalent topic, initiated by the seminal work of Dirac~\cite{Dirac1952}.
	A commonly used scheme to show the existence of a Hamiltonian cycle is to derive a contradiction to the assumption that the graph has no Hamiltonian cycle, by means of extending an assumed longest non-Hamiltonian cycle to a longer cycle. After observing this, Hendry~\cite{Hendry89, Hendry90, Hendry91} proposed the concept of cycle extendability.
	Concretely,	a cycle~$C$ is \emph{extendable} if there exists a cycle~$C'$ which contains all vertices of~$C$ plus one more vertex not in~$C$.
	A graph is {\emph{cycle extendable} if all non-Hamiltonian cycles of the graph are extendable. The notion of cycle extendability is related to the well-studied notion of pancyclicity.
		Recall that a graph on~$n$ vertices is \emph{pancyclic} if it contains a cycle of length~$\ell$ for every~$\ell$ such that $3 \le \ell \le n$.
		Clearly, every cycle extendable graph containing at least one cycle of length~$3$ is pancyclic, and every pancyclic graph is Hamiltonian.
		In~\cite{Hendry90}, Hendry studied several sufficient conditions for a graph to be cycle extendable, and in the conclusion he put forward a ``reverse'' notion of cycle extendability, namely the cycle reducibility.
		Precisely, a graph is {\emph{cycle reducible}} if for every cycle~$C$ containing more than 3 vertices in the graph there exists a cycle~$C'$ which consists of $|V(C)|-1$ vertices of~$C$.
		In light of the facts that (1) a graph is cycle reducible if and only if it is a chordal graph, and (2) Hamiltonian chordal graphs are pancyclic, Hendry~\cite{Hendry90} proposed the following conjecture.
\smallskip

{\bf{Hendry's Conjecture.}} Hamiltonian chordal graphs are cycle extendable.
\smallskip	

		Recall that chordal graphs are the graphs without holes (induced cycles of length at least four) as induced subgraphs.
		Since the work of Hendry~\cite{Hendry90}, the above conjecture has  received considerable attention.
		Remarkably, Hendry's conjecture has been confirmed for a number of special graph classes including planar Hamiltonian chordal graphs~\cite{Jiang02}, Hamiltonian interval graphs~\cite{Abueida06S,ChenFGJ06}, Hamiltonian split graphs~\cite{Abueida06S}, etc. In 2013,  Abueida, Busch, and Sritharan~\cite{AbueidaBS13} added to the list of cycle extendable graph classes the Hamiltonian spider intersection graphs, a superclass of Hamiltonian interval and Hamiltonian split graphs.

		Though all these confirmative works continuously fill the gap step by step and provide more and more evidence towards a positive answer to Hendry's conjecture, the conjecture is, somewhat surprisingly, eventually disproved the first time by Lafond and Seamone~\cite{LafondS15} in~2015. Particularly, Lafond and Seamone derived a counterexample with only~$15$ vertices. Based on this counterexample, they also showed that for any $n \ge 15$, a counterexample on~$n$ vertices exists.
		Nevertheless, Lafond and Seamone's work is not the end of the story, as there are many interesting subclasses of chordal graphs for which whether Hendry's conjecture holds still remained open.
		Notably, as all counterexamples constructed in~\cite{LafondS15} contain an induced \emph{$3${\textsc{-sun}}} which is a forbidden induced subgraph of strongly chordal graphs (see Figure~\ref{fig:suns} for a $3$-{\textsc{sun}}), and contain at least one degree-$2$ vertex, Lafond and Seamone proposed two questions as follows. (For notions in the following discussions, we refer to the next section for the formal definitions.)
		
\smallskip

			{\bf{Question~1.}} {{Are Hamiltonian strongly chordal graphs cycle extendable?}
\smallskip
		
			{\bf{Question~2.}} {{Is there an integer $k > 2$ such that all $k$-connected Hamiltonian chordal graphs are cycle extendable?}}
\smallskip

		Later, based on the concept of $\ss$-cycle extendability first studied in~\cite{BeasleyB2005},  a more general conjecture was proposed.
		Let~$\ss$ be a nonempty subset of positive integers. A cycle~$C$ in a graph~$G$ is $\ss$-extendable if there exists another cycle~$C'$ in~$G$ which consists of all vertices of~$C$ and~$i$ additional vertices for some integer $i\in \ss$. A graph is $\ss$-cycle extendable if every non-Hamiltonian cycle of the graph is $\ss$-extendable. Clearly, $\{1\}$-cycle extendable graphs are exactly cycle extendable graphs. After observing that the counterexamples of Lafond and Seamone are $\{1, 2\}$-cycle extendable, Arangno~\cite{Arangnothesis} put forward in his Ph.D.\ thesis the following conjecture.
\smallskip	
	
			{\bf{Arangno's Conjecture.}} Hamiltonian chordal graphs are $\{1, 2\}$-cycle extendable.
\smallskip

		Our first contribution is the following theorem, which directly refutes Arangno's conjecture and provides negative answers to Questions~1--2~\footnote{Independent of our work, Lafond et al.~\cite{LafondSS20} obtained similar results for these conjectures.}.
		
		\begin{theorem}\label{thm:counter-k-connected}
			Let~$\ss$ be a nonempty set of positive integers, and let~$t$ be the maximum integer in~$\ss$. Then, for all integers $k \ge 0$ and $n \ge 14 + t + 2k$, there exists a $(2 + k)$-connected Hamiltonian strongly chordal graph with~$n$ vertices that is not $\ss$-cycle extendable.
		\end{theorem}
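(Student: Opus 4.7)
The plan is to produce, for every triple $(n, k, \ss)$ meeting the hypothesis, an explicit graph $G_{n,k,\ss}$ satisfying all four properties simultaneously: Hamiltonian, $(2+k)$-connected, strongly chordal, and carrying a designated cycle $C$ that is not $\ss$-extendable. I would build $G_{n,k,\ss}$ in three layers whose sizes account for the bound $14 + t + 2k$. The \emph{core layer} is a fixed strongly chordal graph on $14$ vertices containing a Hamiltonian cycle~$\mathcal{H}$ together with a shorter designated cycle $C \subsetneq \mathcal{H}$ that is not $\{1\}$-extendable; this core plays the role of, but avoids the induced $3$-\textsc{sun} appearing in, the Lafond--Seamone counterexample. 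The \emph{blocker layer} adds $t$ off-$C$ vertices $b_1,\dots,b_t$ with carefully prescribed pairwise non-consecutive neighborhoods on $V(C)$, upgrading non-extendability from $\{1\}$ to the whole set $\ss$. The \emph{connectivity layer} attaches $2k$ vertices arranged as $k$ true-twin pairs inside a designated clique of the core, each pair contributing $+1$ to the vertex connectivity. Any residual vertices needed to reach exactly $n$ are further true-twin copies of a fixed simplicial vertex of the core.

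The crux of the argument is engineering the blockers so that no $i \in \ss$ admits an extension of $C$. For a single blocker $b_j$ whose neighbors on $C$ are non-consecutive, a direct observation shows that $b_j$ cannot be inserted into $C$, since splicing a single vertex into a cycle requires two consecutive neighbors. For general $i \leq t$, I would choose the sets $N(b_j) \cap V(C)$ as shifted copies of one fixed non-consecutive pattern around $C$, so that for every $i$-subset $T$ of off-cycle vertices the graph on $V(C) \cup T$ fails to admit a cycle through all of its vertices. A finite case analysis driven by the prescribed adjacency pattern, taken over all $i \in \ss$, then certifies non-$\ss$-extendability.

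Strong chordality of $G_{n,k,\ss}$ would be verified by exhibiting an explicit strong elimination ordering: peel off the blockers first (each becomes simplicial once earlier blockers have been removed, because its neighborhood on $C$ is contained in a clique of the core), then the twin and padding vertices (simplicial by construction), and finally the core clique in any chordal order. True twinning preserves strong chordality — it cannot introduce a new induced sun — and preserves Hamiltonicity because each new twin can be threaded into $\mathcal{H}$ immediately after its partner. Vertex connectivity is then checked locally: the only small cut in the core is of size~$2$, and each true-twin pair yields an additional independent routing across that cut, forcing every separator of $G_{n,k,\ss}$ to have size at least $2+k$.

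The chief obstacle is balancing two opposing requirements on the blocker neighborhoods: they must be non-consecutive on $C$ (to obstruct insertions) yet must not collectively induce a sun (to preserve strong chordality). I expect to resolve this by constraining every blocker-neighborhood to lie inside a single fixed clique of the core; then any two blockers are either non-adjacent with a simplicial separator in between or share a common clique, which is structurally incompatible with a $k$-\textsc{sun} and preserves strong chordality. Once this clique constraint is in place, the outstanding verifications — finitely many insertion patterns for $i \leq t$ and the layer-by-layer elimination ordering — reduce to direct bookkeeping.
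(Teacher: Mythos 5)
Your plan fails at its central step. You claim that a blocker $b_j$ whose neighbors on $C$ are pairwise non-consecutive ``cannot be inserted into $C$, since splicing a single vertex into a cycle requires two consecutive neighbors.'' That is not what cycle extendability requires: an extension of $C$ is \emph{any} cycle on $V(C)\cup\{b_j\}$ (more generally $V(C)$ plus $i$ new vertices), and it need not contain a single edge of $C$. Ruling out all such cycles is precisely the hard part of the theorem, and it is what the paper spends most of its effort on (the heavy-edge observation, the passage to the reduced graph $\widehat{H}^-$ where degree-$2$ vertices force edges, the induction reducing an $\{i\}$-extension to an $\{i-1\}$-extension, and a separate multi-case path analysis once the connectivity gadget is added). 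Your ``finite case analysis driven by the prescribed adjacency pattern'' has no substitute for this, and your own design principle works against you: you confine each blocker's neighborhood to a clique of the core, which makes the blocker simplicial, and then nothing you have said prevents $G[V(C)\cup\{b_j\}]$ from being Hamiltonian --- indeed in chordal graphs clique attachments are exactly the configurations that tend to make cycles extendable (compare the simplicial-vertex argument in the paper's Lemma~\ref{lemma:chordal-cycle}(1)). A second unproved assertion is the core itself: you posit a $14$-vertex Hamiltonian strongly chordal graph with a non-$\{1\}$-extendable cycle, which would by itself be a counterexample stronger (smaller) than anything constructed in the paper; the paper's core $\widehat{H}$ has $15$ vertices and its strong chordality and non-extendability require explicit verification, not assertion.

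The connectivity and padding layers have the same flaw of being treated as independent of non-extendability. A vertex added off the witness cycle that is universal, or is a true twin of a vertex on the cycle, extends that cycle immediately (insert it next to its twin), so all added vertices must be threaded onto the designated cycle and the non-extendability argument must be redone for the enlarged cycle --- this is why the paper re-proves non-$[t]$-extendability for $k\ge 1$ with a fresh case analysis of the paths obtained by deleting $\{u_2,u_3\}\cup Y$ from a hypothetical extension. Moreover, adding $k$ true-twin pairs ``inside a designated clique'' does not by itself raise the connectivity to $2+k$: any $2$-cut of the core that avoids the twins' neighborhoods survives. The paper instead adds $k$ universal vertices (the clique side of a $(k,k)$-star), which makes the connectivity claim a one-line reduction to $2$-connectivity of the remainder. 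As it stands, your proposal outlines a plausible architecture but defers or misstates every step that actually carries the proof.
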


On the positive side, we add to the list of graph classes fulfilling Hendry's conjecture two subclasses of Hamiltonian chordal graphs. Recently, Gerek~\cite{Gerekthesis} proved that Hendry's conjecture holds for Hamiltonian Ptolemaic graphs. Recall that Ptolemaic graphs are exactly chordal graphs that are $3$-{\textsc{fan}}-free~\cite{Howorka}. Though that we could not extend this result to Hamiltonian $4$-{\textsc{fan}}-free chordal graphs, we show that Hendry's conjecture holds for two subclasses of $4$-{\textsc{fan}}-free chordal graphs, namely, Hamiltonian $4$-{\textsc{fan}}-free chordal graphs where every induced $K_5-e$ has true twins, and Hamiltonian $\{4\textsc{-fan}, \overline{A}\}$-free chordal graphs. See Figure~\ref{fig:A-bar} for~$3\textsc{-fan}$, $4\textsc{-fan}$, $K_5-e$, and~$\overline{A}$.

\begin{figure}[h]
\centering
{\includegraphics[width=0.95\textwidth]{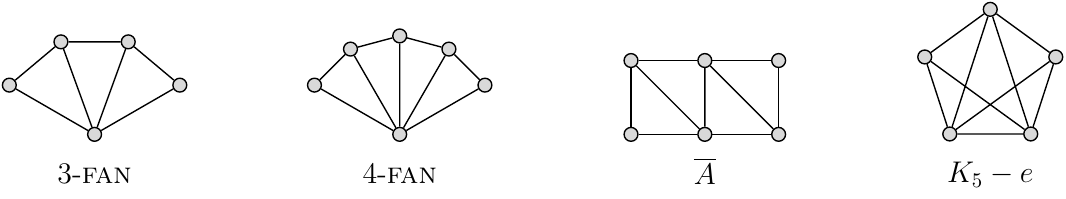}}
\caption{$3\textsc{-fan}$, $4\textsc{-fan}$, $\overline{A}$, and $K_5-e$}
\label{fig:A-bar}
\end{figure}
		
		\begin{theorem}\label{thm:fan-k5}
			Hamiltonian $4\textsc{-fan}$-free chordal graphs where every induced $K_5 - e$ (if there are any)  has true twins are cycle extendable.
		\end{theorem}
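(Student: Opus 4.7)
The plan is to argue by contradiction via a minimum counterexample. Suppose $G$ is a Hamiltonian $4\textsc{-fan}$-free chordal graph in which every induced $K_5-e$ has true twins, and let $C=v_1v_2\cdots v_\ell v_1$ be a non-Hamiltonian cycle in $G$ that is not extendable and has minimum length among such cycles. Since $G$ is Hamiltonian (hence connected), some vertex $v\in V(G)\setminus V(C)$ has a neighbor on $C$; among all such vertices I would pick one whose neighborhood on $C$ is maximal, and list its neighbors on $C$ in cyclic order as $v_{j_1},\ldots,v_{j_k}$. Non-extendability of $C$ immediately forces no two of these neighbors to be consecutive on $C$, so each arc $P_i$ of $C$ joining $v_{j_i}$ and $v_{j_{i+1}}$ has length at least two.

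The core technical step is to analyze the chord structure of the cycle $D_i := v\,v_{j_i}\,P_i\,v_{j_{i+1}}\,v$. Because $G$ is chordal and $v$ has no neighbor in the interior of $P_i$, the chords of $D_i$ live inside the span of $P_i$ together with possibly the edge $v_{j_i}v_{j_{i+1}}$. Using a perfect elimination ordering restricted to the interior of $P_i$, I would classify these chords and show that $D_i$ decomposes either into a fan rooted at one of $v_{j_i}, v_{j_{i+1}}$ or into a short sequence of triangles sharing vertices. From this classification I aim to read off either (a) an induced $4\textsc{-fan}$ on $v$ and four consecutive vertices of $P_i$, a direct contradiction to the hypothesis, or (b) an induced $K_5-e$ supported on $v$, the two endpoints $v_{j_i},v_{j_{i+1}}$, and two interior vertices of $P_i$ joined by suitable chords.

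In case (b) the true-twin hypothesis does exactly the work required: if $x,y$ are a guaranteed true-twin pair inside such a $K_5-e$, then $N[x]=N[y]$, and swapping $x$ for $y$ in $C$ produces a cycle $C'$ on $V(C)$ along which $v$ now has two \emph{consecutive} neighbors, yielding the forbidden extension of $C$. I would complete the argument by dispatching the small-arc cases: when every $P_i$ has length exactly two, chordality forces the chord $v_{j_i}v_{j_{i+1}}$ for each $i$, and then a brief case check using $k\ge 2$ (together with the maximality of $v$'s neighborhood on $C$) either produces the desired extension directly or exhibits an induced $4\textsc{-fan}$ straddling two consecutive arcs.

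The principal obstacle I anticipate lies in the intermediate regime, where the arcs $P_i$ have length three or four and the induced triangulation of $D_i$ is neither a clean fan nor a clean $K_5-e$. Showing that \emph{every} such triangulation either already contains an induced $4\textsc{-fan}$ or contains an induced $K_5-e$ whose mandated true twins actually furnish a concrete reroute of $C$ through $v$ (rather than merely twins that are useless for the extension) is the technical heart of the argument, and is where the hypothesis on $K_5-e$ has to be deployed with care.
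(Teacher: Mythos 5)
There is a genuine gap, and it sits exactly where you flagged it, but it is worse than a missing technicality: the local analysis of a single arc cannot yield a contradiction at all. Fix consecutive neighbors $v_{j_i},v_{j_{i+1}}$ of $v$ on $C$ and the cycle $D_i=v\,v_{j_i}P_iv_{j_{i+1}}v$. Chordality only forces \emph{some} triangulation of $D_i$, and since $v$ has no neighbor in the interior of $P_i$, a perfectly legal picture is: the chord $v_{j_i}v_{j_{i+1}}$ is present (so $v$ sits on a single triangle) and the arc $P_i$ is triangulated arbitrarily among its own vertices. Such a configuration contains no induced $4$\textsc{-fan} and no induced $K_5-e$ involving $v$ — for instance an arc of length two gives just a $K_4$ minus the edge $vp$ — so your dichotomy ``fan rooted at an endpoint or $K_5-e$'' is false, and no contradiction can be extracted from $D_i$ alone. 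This is not an artifact of your write-up: the Lafond--Seamone counterexamples show that a nonextendable cycle with an outside vertex whose neighbors are pairwise nonconsecutive is locally consistent in a chordal graph, so any proof must use the hypotheses globally. Your case~(b) step is also unjustified as stated: if the guaranteed true-twin pair inside the $K_5-e$ is, say, the pair $\{v_{j_i},v_{j_{i+1}}\}$ or two interior vertices of $P_i$, swapping them along $C$ does not obviously create two consecutive neighbors of $v$, so the ``forbidden extension'' does not follow.

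The paper's proof supplies precisely the global leverage your sketch lacks, and it is genuinely different in structure. It takes a counterexample $G$ minimum in the \emph{number of vertices} (not a shortest bad cycle), looks at a whole connected component $Q$ of $G-C$ rather than a single outside vertex, and first proves (using simplicial vertices and the minimality of $G$) that $S=N(Q)$ is \emph{not} a clique, that no two vertices of $S$ are consecutive on $C$, and that adjacent vertices of $S$ share a neighbor in $Q$. It then chooses two nonadjacent vertices $x,y\in S$ at minimum distance along $C$ and builds holes by gluing paths through $Q$ to segments of $C$; these holes force specific chords, and the forbidden subgraphs ($4$\textsc{-fan}s, or a $K_5-e$ containing a vertex $w\in Q$ off the cycle) emerge from the interaction between $C$ and $Q$, not from triangulating one arc. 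Even the true-twin hypothesis is used differently: the twin candidates inside the $K_5-e$ are eliminated one by one (a twin of $w$ would make two vertices of $S$ consecutive on $C$; twins $z,z'$ would create a $4$\textsc{-fan}), rather than used to reroute $C$. If you want to salvage your approach, you would need an analogue of the ``$N(Q)$ is not a clique'' lemma and the hole-building mechanism through the outside component; without them the intermediate regime you identified as the technical heart is not just unfinished but unreachable by arc-local reasoning.
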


		\begin{theorem}\label{thm:fan-free-power}
			Hamiltonian $\{4\textsc{-fan}, \overline{A}\}$-free chordal graphs are cycle extendable.
		\end{theorem}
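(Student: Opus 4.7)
The plan is to argue by contradiction. Suppose $G$ is a Hamiltonian $\{4\textsc{-fan},\overline{A}\}$-free chordal graph that contains a non-extendable cycle $C = v_0 v_1 \cdots v_{k-1} v_0$, and let $W = V(G) \setminus V(C)$, which is non-empty. Because $G$ is Hamiltonian, hence connected, some vertex of $W$ is adjacent to $V(C)$. Among all such vertices, I would pick a vertex $w$ whose neighborhood $N_C(w) := N(w) \cap V(C)$ on $C$ is maximum under inclusion (or, alternatively, of maximum size), writing $N_C(w) = \{v_{i_1},\ldots,v_{i_p}\}$ in cyclic order along $C$. If two of these neighbors are consecutive on $C$, say $v_{i_j}$ and $v_{i_j+1}$, then $C$ is immediately extended by inserting $w$ between them, contradicting the choice of $C$. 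So every cyclic arc between consecutive neighbors of $w$ on $C$ (call these \emph{gaps}) has internal length at least one.

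Next I would extract structure inside each gap using chordality. For a gap from $v_{i_j}$ to $v_{i_{j+1}}$ with intermediate $C$-vertices $a_1,\dots,a_{\ell-1}$, the cycle $w\,v_{i_j}\,a_1\,\cdots\,a_{\ell-1}\,v_{i_{j+1}}\,w$ has length at least $4$ and contains no chord through $w$ (the vertices $a_s$ are not neighbors of $w$ by choice of the gap), so by chordality we iteratively triangulate this cycle using chords of $G[V(C)]$. A standard end-game of the triangulation argument produces either a chord $v_{i_j}v_{i_{j+1}}$ of $C$ or an internal vertex $a_s$ adjacent to both endpoints of the gap; the latter would create an induced $C_4$ on $\{w,v_{i_j},a_s,v_{i_{j+1}}\}$ with $w a_s,\,v_{i_j}v_{i_{j+1}}$ non-edges, contradicting chordality, so each gap endpoint pair $v_{i_j}v_{i_{j+1}}$ is joined by a chord of $C$. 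Using these chords together with a maximality/extremality choice of $(C,w)$ (for example, among all non-extendable $C$ and suitable $w$, minimize $|V(C)| + \sum_j (\text{gap length})$), I would pin down a small canonical configuration on the union of $\{w\}$, $N_C(w)$, and a constantly many further vertices from $C$.

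The role of the forbidden subgraphs is to shrink this canonical configuration. Four-fan-freeness prevents $w$ from having four neighbors on $C$ whose induced subgraph is a $P_4$, which, combined with the chord information derived above, forces $N_C(w)$ to induce a graph of very restricted type (essentially a clique with a few pendant-like vertices). In the remaining unresolved cases, where the gap structure still blocks a direct insertion, I would select five (or six) vertices from $\{w\} \cup N_C(w) \cup V(C)$, taking the gap endpoints, one internal gap vertex, and, if needed, the vertex of $C$ opposite to the chord found above, and verify that their induced subgraph is exactly $\overline{A}$, contradicting $\overline{A}$-freeness.

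The main obstacle will be the last step: identifying the minimal residual configurations that survive chordality and $4\textsc{-fan}$-freeness, and exhibiting an induced $\overline{A}$ in each of them. This will likely require a delicate interplay between the chords guaranteed on $C$, the non-adjacencies forced by the gap structure, and the extremal choice of $(C,w)$, and careful bookkeeping of which pairs of vertices in the candidate $\overline{A}$ are edges in $G$ and which are not. A secondary technical point is confirming that an extremal pair $(C,w)$ can be chosen so that the triangulation of a gap behaves canonically; otherwise one might need a secondary induction on the gap length, treating the case of long gaps separately by transferring $w$ to another witness vertex with a smaller $N_C$-structure.
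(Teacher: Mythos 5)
There is a genuine gap, on two levels. First, your proposal is a plan rather than a proof: the step that actually carries the theorem---showing that every residual configuration surviving chordality and non-extendability contains an induced $4$-\textsc{fan} or $\overline{A}$---is exactly the part you leave open (``the main obstacle will be the last step''). The paper's proof consists precisely of that delicate analysis: it fixes two non-adjacent attachment vertices $x,y$ at shortest distance on $C$, locates one or two further attachment vertices $z,z'$ strictly inside the $x$--$y$ arc, produces witnesses $x',y'$ on $C$ via the common-neighbor lemma for edges of a cycle in a chordal graph, and then exhibits explicit six-vertex sets inducing a $4$-\textsc{fan} or an $\overline{A}$ (note also that a $4$-\textsc{fan} here is a $P_5$ plus a dominating vertex, so the relevant obstruction is five neighbors inducing a $P_5$, not ``four neighbors inducing a $P_4$'' as you write). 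None of this case analysis is present or clearly reachable from your setup.

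Second, and more structurally, your framework of a single outside vertex $w$ with maximal $N_C(w)$ is missing the global ingredient the paper relies on. The paper argues on a \emph{minimum counterexample} and works with a whole connected component $Q$ of $G-C$ and $S=N(Q)$: when $S$ is a clique, no hole, $4$-\textsc{fan}, or $\overline{A}$ need appear anywhere near $C$, so no local forbidden-subgraph argument can conclude; the paper disposes of this case by taking a simplicial vertex $v$ of $G$ inside $Q$, observing $G-v$ is a smaller Hamiltonian graph in the same hereditary class, hence cycle extendable, and transferring the extension back to $G$ (Lemma~\ref{lemma:chordal-cycle}(1)). Your proposal has no induction or minimality of this kind, so the clique-attachment case (e.g.\ $w$ simplicial with its clique neighborhood scattered along $C$, where insertion between consecutive neighbors is impossible but nothing forbidden arises) is not handled, and it cannot be handled purely by the local chord bookkeeping you describe. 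Relatedly, several of the paper's forbidden configurations are built using \emph{two} vertices of $Q$ and a path between them inside $Q$ (the vertices $u,v$ and the path $P'$, guaranteed by the common-neighbor statement Lemma~\ref{lemma:chordal-cycle}(3)); a single-vertex witness $w$ cannot see this structure. Your opening observations (no two neighbors of $w$ consecutive on $C$, and consecutive neighbors of $w$ on $C$ being adjacent in $G$ by a triangulation/hole argument) are correct, but they are only the very first moves, and the route you sketch from there does not close.
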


In addition, it is known that the class of $k$-leaf powers for all integers~$k\geq 1$ are a natural subclass of strongly chordal graphs~\cite{NevriesR16}.
Gerek's result also implies that Hendry's conjecture holds for Hamiltonian $k$-leaf powers for $k=1, 2, 3$ because they are subclasses of Ptolemaic graphs. 
%As a $4$-leaf power contains no induced $4$-{\textsc{fan}}s, and if a $4$-leaf power contains $K_5-e$ then every of its induced $K_5-e$ has true twins~\cite{Rautenbach06},
Because $4$-leaf powers are free of induced $4$-{\textsc{fan}}s and contain induced $K_5-e$ only with true twins~\cite{Rautenbach06}, we obtain the following corollary as a consequence of Theorem~\ref{thm:fan-k5}.

\begin{corollary}\label{cor:4-leaf-power}
Hamiltonian $4$-leaf powers are cycle extendable.
\end{corollary}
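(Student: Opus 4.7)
The plan is to derive the corollary as a direct consequence of Theorem~\ref{thm:fan-k5}, by verifying that every Hamiltonian $4$-leaf power satisfies the three structural hypotheses of that theorem: being chordal, being $4\textsc{-fan}$-free, and having true twins in every induced $K_5 - e$. None of these facts need to be established from scratch; they are already available in the literature and are in effect quoted in the paragraph preceding the corollary.

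First, I would invoke the result of Nevries and Rosenke~\cite{NevriesR16} which places the class of $k$-leaf powers, for every $k \ge 1$, inside the class of strongly chordal graphs; in particular, a $4$-leaf power is chordal. Second, I would cite Rautenbach's structural characterization~\cite{Rautenbach06} of $4$-leaf powers, which states precisely that these graphs are free of induced $4\textsc{-fan}$s and that any induced $K_5-e$ in such a graph contains a pair of true twins. With both of these structural properties in hand, a Hamiltonian $4$-leaf power is a Hamiltonian $4\textsc{-fan}$-free chordal graph in which every induced $K_5 - e$ has true twins, so Theorem~\ref{thm:fan-k5} applies verbatim and yields cycle extendability.

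I do not foresee any genuine technical obstacle, since the argument is essentially one of bookkeeping: match each hypothesis of Theorem~\ref{thm:fan-k5} with the corresponding structural result about $4$-leaf powers. The only point requiring a small amount of care is ensuring that the notion of \emph{true twins} used by Rautenbach coincides with the one used in the statement of Theorem~\ref{thm:fan-k5} (two vertices with identical closed neighborhoods); once this is observed, the proof reduces to a single line citing Theorem~\ref{thm:fan-k5} together with~\cite{NevriesR16, Rautenbach06}.
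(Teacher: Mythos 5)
Your proposal is correct and matches the paper's own derivation: the corollary is obtained by citing~\cite{NevriesR16} for (strong) chordality of leaf powers and~\cite{Rautenbach06} for $4\textsc{-fan}$-freeness and the true-twins property of every induced $K_5-e$, and then applying Theorem~\ref{thm:fan-k5} directly. No further comment is needed.
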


We also would like to mention, in passing, that $\{4\textsc{-fan}, \overline{A} \}$-free chordal graphs contain the well-partitioned chordal graphs coined by Ahn~et~al.~\cite{AhnWG2020} very recently.
		
\bigskip
		
		\noindent{\it{Organization.}} In Section~\ref{sec-pre}, we provide basic notions used in our paper. Section~\ref{sec:Counterexamples} is devoted to the proof of Theorem~\ref{thm:counter-k-connected}, and Section~\ref{sec-positive} composites the proofs of Theorems~\ref{thm:fan-k5}--\ref{thm:fan-free-power}. We conclude our study in Section~\ref{sec-conclusion}.

		\section{Preliminaries}
		\label{sec-pre}
		We assume the reader is familiar with the basics of graph theory. We reiterate numerous important notions used in our discussions, and refer to~\cite{Douglas2000} for notions used in the paper but whose definitions are not provided in this section. By convention, for an integer~$i$, we use~$[i]$ to denote the set of all positive integers at most~$i$.
		
		All graphs considered in this paper are finite, undirected, and simple. The vertex set and the edge set of a graph~$G$ are denoted by~$V(G)$ and~$E(G)$, respectively.
		We use~$uv$ to denote an edge between two vertices~$u$ and~$v$. For a vertex $v \in V(G)$, $N(v) = \{ u \mid uv \in E(G) \}$ denotes the \emph{(open) neighborhood} of~$v$, and $N[v] = N(v) \cup \{v\}$ denotes the \emph{closed neighborhood} of~$v$. The \emph{degree} of~$v$ is the cardinality of~$N(v)$. For a subset $X \subseteq V(G)$, let $N(X) = \bigcup_{v \in X} N(v) \setminus X$ and $N[X] = N(X) \cup X$. The subgraph induced by~$X$ is denoted by~$G[X]$. For brevity, we write $G - X$ for the subgraph of~$G$ induced by $V(G)\setminus X$. When $X = \{x\}$, we simply use the shorthand $G- x$ for $G - X$.
		
		We say that two vertices are {\emph{true twins}} if they have the same closed neighborhood. A vertex~$v$ is a true-twin vertex (true-twin for short) if there is another vertex~$u$ such that~$v$ and~$u$ are true twins. In this case, we call~$v$ a true-twin for~$u$. A {\emph{true-twins pair}} refers to a pair $\{u, v\}$ such that~$u$ and~$v$ are true twins.
		
		A vertex is \emph{universal} in a graph if it is adjacent to all other vertices. A vertex of degree~$0$ is called an \emph{isolated} vertex.
		A vertex~$v$ in a graph~$G$ is \emph{simplicial} if~$N[v]$ is a clique.
		Moreover, if the vertices in~$N(v)$ can be ordered as $(v_1, v_2, \ldots, v_k)$ such that $N[v_1] \subseteq N[v_2] \subseteq \cdots \subseteq N[v_k]$, where $k=|N(v)|$, then we say that~$v$ is \emph{simple}. A simple vertex is always simplicial.

\begin{figure}[h]
\centering
{
\includegraphics[width=0.75\textwidth]{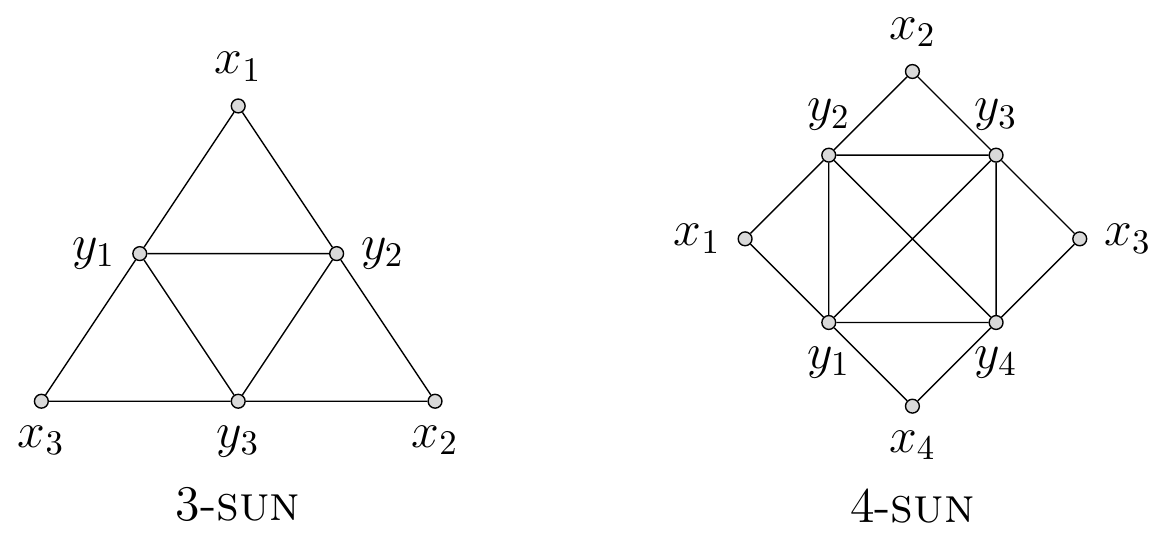}
}
\caption{$3$-\textsc{sun} and $4$-\textsc{sun}}
\label{fig:suns}
\end{figure}
		A \emph{hole} in a graph~$G$ is an induced cycle of~$G$ of length at least~$4$. An {\emph{independent set}}  of~$G$ is a subset~$S$ of vertices such that~$G[S]$ contains only isolated vertices. A {\emph{clique}} is a subset~$S$ of vertices such that there is an edge between every pair of vertices in~$S$.
		For an integer $k \ge 3$, a \emph{$k$-\textsc{sun}} is a graph of~$2k$ vertices which can be partitioned into an independent set $X = \{ x_1, \ldots, x_k \}$ and a clique $Y = \{ y_1, \ldots, y_k \}$  such that for every $i \in [k]$, $x_i$ is only adjacent to $y_{i}$ and $y_{(i \pmod k)+1}$. See Figure~\ref{fig:suns} for the $3$-\textsc{sun}} and $4$-\textsc{sun}}.
		
		A graph is \emph{chordal} if it does not contain any holes.
		\emph{Strongly chordal graphs} are chordal graphs without induced $k$-{\textsc{sun}}s for all $k \ge 3$.
		Hence, the minimal forbidden induced subgraphs of strongly chordal graphs are $k$-{\textsc{sun}}s and holes, none of which contains a universal vertex or a true-twins pair.
		Strongly chordal graphs admit an ordering characterization.
		In particular, a \emph{simple elimination ordering} of a graph~$G$ is an ordering $(v_1, v_2, \ldots, v_n)$ over~$V(G)$ such that for every $i\in [n]$,~$v_i$ is simple in the subgraph of~$G$ induced by $\{v_i, v_{i+1}, \dots, v_n\}$. It has been proved that a graph is a strongly chordal graph if and only if it admits a simple elimination ordering~\cite{Farber83}.

		A graph~$G$ is a $k$-leaf power if there exists a tree~$T$ such that (1) the vertices of~$G$ one-to-one correspond to the leaves of~$T$, and (2) for every two vertices $u, v\in V(G)$, it holds that $uv \in E(G)$ if and only if the distance between~$u$ and~$v$ in~$T$ is at most~$k$. It is a folklore that $k$-leaf powers are strongly chordal graphs~\cite{NevriesR16}.
		
A {\emph{path}} between two vertices~$v$ and~$u$ is a sequence of distinct vertices such that~$v$ and~$u$ are the first and last vertices in the sequence and, moreover, every two consecutive vertices are adjacent. Such a path is called a $v$-$u$ path. For a $v$-$u$ path~$P$, and two vertices $v'\in N(v)$ and $u'\in N(u)$ not in the path, $v' P u'$ denotes the $v'$-$u'$ path obtained from~$P$ by putting~$v'$ before~$v$ and putting~$u'$ after~$u$. The length of a path refers to the number of vertices in the path minus one.

		A graph is {\emph{connected}} if it has only one vertex or between every two vertices there exists a path in the graph. A graph is {\emph{$k$-connected}} if it is connected after the deletion of any subset of at most $k-1$ vertices.

\section{Negative Results}\label{sec:Counterexamples}
This section is devoted to the proof of Theorem~\ref{thm:counter-k-connected}.
Our counterexamples are based on the graphs~$\widehat{H}$ and~$\widehat{H}^-$ shown in Figure~\ref{fig:base-graphs}, where the graph~$\widehat{H}$ is obtained from the $15$-vertex counterexample of Lafond and Seamone by adding one edge $a u_3$~\cite{LafondS15}. The following two observations follow immediately from the definitions of strongly chordal graphs and Hamiltonian graphs respectively.
		
		\begin{observation}\label{prop:universal-strongly}
			A strongly chordal graph remains strongly chordal after adding or deleting a universal vertex or a true-twin for any vertex.
		\end{observation}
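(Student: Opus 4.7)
The plan is to exploit the structural fact already emphasized just before the observation: the minimal forbidden induced subgraphs of strongly chordal graphs—holes and $k$-\textsc{sun}s for $k\geq 3$—contain neither a universal vertex nor a true-twins pair. Because strong chordality is defined by a family of forbidden induced subgraphs, it is hereditary, so I will split the observation into four cases (add/delete a universal vertex, add/delete a true-twin) and in each case argue by contradiction that no minimal forbidden subgraph can survive the operation.

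The two deletion cases are immediate: if $G$ is strongly chordal and $G'=G-x$ for some vertex $x$, then every induced subgraph of $G'$ is already an induced subgraph of $G$, hence is neither a hole nor a $k$-\textsc{sun}. So $G'$ is strongly chordal, regardless of whether the deleted $x$ was universal or a true-twin.

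The universal-vertex addition case will be handled as follows. Let $G'$ be obtained from the strongly chordal graph $G$ by adding a universal vertex $v$, and suppose for contradiction that $G'[S]$ is a hole or a $k$-\textsc{sun} for some $S\subseteq V(G')$. If $v\notin S$, then $G[S]=G'[S]$ is a forbidden induced subgraph of $G$, contradicting the strong chordality of $G$. If $v\in S$, then $v$ is adjacent in $G'[S]$ to every other vertex of $S$, so $v$ is universal in $G'[S]$; but no hole or $k$-\textsc{sun} admits a universal vertex, again a contradiction.

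The true-twin addition case will proceed analogously, with a small twist. Suppose $G'$ is obtained from $G$ by adding $v$ as a true-twin for an existing vertex $u$, so that $N_{G'}[v]=N_{G'}[u]$, and suppose $G'[S]$ is a forbidden induced subgraph. If $v\notin S$ the argument is as above. If $v,u\in S$, then $v$ and $u$ form a true-twins pair inside $G'[S]$, impossible since the forbidden subgraphs contain no true-twins pair. The remaining subcase is $v\in S$ but $u\notin S$; here I would exhibit an explicit isomorphism from $G'[S]$ to $G[(S\setminus\{v\})\cup\{u\}]$ that swaps $v$ with $u$ and fixes every other vertex, using the fact that $w$ is adjacent to $v$ in $G'$ iff $w$ is adjacent to $u$ in $G$ for all $w\neq u,v$. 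This isomorphism transports the forbidden subgraph into $G$, the desired contradiction. The only mildly delicate point of the entire proof is this last swap argument; everything else is direct bookkeeping on the definitions.
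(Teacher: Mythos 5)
Your proposal is correct and follows the same route the paper intends: the paper leaves this observation as immediate from the remark in the preliminaries that the forbidden induced subgraphs of strongly chordal graphs (holes and $k$-\textsc{sun}s) contain no universal vertex and no true-twins pair, which is exactly the argument you spell out. Your additional details---heredity for the deletion cases and the swap isomorphism when the new twin $v$ lies in $S$ but its partner $u$ does not---are accurate and simply make the ``immediate'' claim explicit.
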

		
		\begin{observation}\label{prop:Hamiltonian}
			A Hamiltonian graph remains Hamiltonian after adding a universal vertex or a true-twin vertex for any vertex.
		\end{observation}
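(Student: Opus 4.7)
The plan is to take a Hamiltonian cycle $C = v_1 v_2 \cdots v_n v_1$ of the original graph $G$ and construct an explicit Hamiltonian cycle of the augmented graph by a single local splice. The argument will split into two cases corresponding to the two permitted operations.

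First, I would handle the universal-vertex case. Let $w$ denote the newly added vertex, which by hypothesis is adjacent to every $v_i$. In particular $w v_n$ and $w v_1$ are edges of the new graph, so splicing $w$ into $C$ between $v_n$ and $v_1$ yields the sequence $v_1 v_2 \cdots v_n w v_1$, a Hamiltonian cycle of the augmented graph.

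Second, I would handle the true-twin case. Let $u'$ be the new vertex, and let $u \in V(G)$ be the existing vertex for which $N[u'] = N[u]$ in the augmented graph. The first step is the observation that, because closed neighborhoods contain the vertex itself and $u \neq u'$, the identity $u \in N[u'] = N[u]$ forces $u u'$ to be an edge. Next, let $x$ and $y$ be the two cycle-neighbors of $u$ along $C$, so that $C$ reads $\cdots x\, u\, y \cdots$. Since $x \in N(u) \subseteq N(u')$, the edge $x u'$ is present, and therefore replacing the single edge $x u$ of $C$ by the length-two path $x\, u'\, u$ produces a Hamiltonian cycle that also contains $u'$.

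There is essentially no obstacle here; the verification is by inspection of the two local splices. The only point that I would flag explicitly is the translation from the definition \emph{same closed neighborhood} to the adjacency $u u' \in E$, since this is precisely what allows $u'$ to be inserted immediately beside $u$ on the cycle; once that small observation is in place, both cases reduce to inserting a single vertex between two consecutive vertices of $C$ that are already in its neighborhood.
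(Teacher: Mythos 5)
Your proof is correct and is exactly the splice argument that the paper implicitly relies on when it states this as an observation ``following immediately'' from the definitions (no explicit proof is given there). Both cases, including the small point that true twins are adjacent because each lies in the other's closed neighborhood, are handled properly.
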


\begin{figure}[h]
\centering
{
\includegraphics[width=\textwidth]{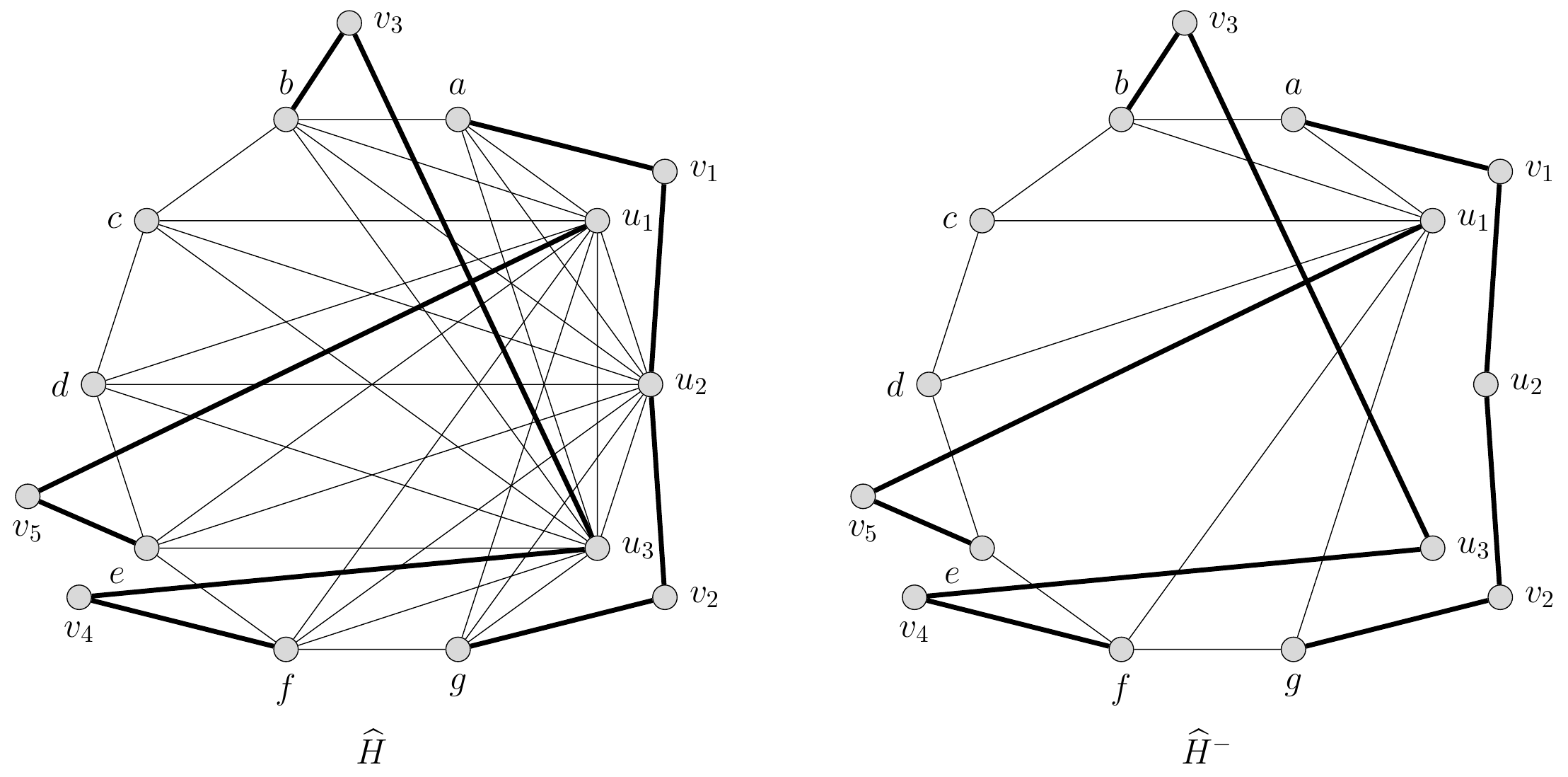}
\caption{Two graphs~$\widehat{H}$ and~$\widehat{H}^-$. %, where~$\widehat{H}^-$ is a spanning subgraph of~$\widehat{H}$.
Heavy edges are in bold.
				The graph $\widehat{H} - \{v_1,v_2,v_3,v_4,v_5\}$ is composed by a $7$-vertex path $abcdefg$ plus three universal vertices $u_1$,~$u_2$, and~$u_3$, and $\widehat{H}^-$ is obtained from $\widehat{H}$ by the deletion of the edge $u_1 e$ and the nonheavy edges incident to~$u_2$ and~$u_3$.
			}
			\label{fig:base-graphs}
}
\end{figure}

		\begin{lemma}\label{lem:TwoSCGraphs}
			$\widehat{H}$ is a Hamiltonian strongly chordal graph.
		\end{lemma}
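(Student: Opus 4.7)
The plan is to establish Hamiltonicity and strong chordality separately, using directly the structure of~$\widehat{H}$ depicted in Figure~\ref{fig:base-graphs}.

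For Hamiltonicity, I would simply read off an explicit Hamiltonian cycle from the heavy edges highlighted in the figure. Since $\widehat{H}$ is obtained from the $15$-vertex graph of Lafond and Seamone by the addition of the single edge~$au_3$, and the Lafond-Seamone graph is itself known to be Hamiltonian, the same cycle that witnesses Hamiltonicity there witnesses it in~$\widehat{H}$; I would list it explicitly to keep the proof self-contained.

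For strong chordality, the cleanest route is to exhibit a simple elimination ordering and invoke the characterization of~\cite{Farber83}. My plan is to eliminate the five vertices~$v_1,\dots,v_5$ first, in an order chosen so that each one is simple in the subgraph remaining at its turn; this step is a small case analysis driven by the adjacencies displayed in Figure~\ref{fig:base-graphs}. Once these five vertices are removed, the residual graph is $\widehat{H} - \{v_1,\dots,v_5\}$, namely the $7$-vertex path $abcdefg$ plus three universal vertices $u_1,u_2,u_3$. In this residual graph the endpoint~$a$ is simple: its neighborhood is $\{b,u_1,u_2,u_3\}$, and the closed neighborhoods satisfy $N[b] \subseteq N[u_1] = N[u_2] = N[u_3]$, which is a chain. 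Deleting~$a$ turns~$b$ into a new path endpoint and the same argument applies; iterating along the path and finishing with the three universal vertices (which induce a clique and are therefore trivially simple at the end) yields a simple elimination ordering of the residual graph. Concatenating the two parts produces a simple elimination ordering of~$\widehat{H}$.

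The main obstacle will be verifying the initial segment of the ordering, i.e., confirming that $v_1,\dots,v_5$ can in fact be eliminated as simple vertices in some order and checking at each removal that the closed neighborhoods of the current neighbors line up in a chain under inclusion. An alternative route, if this bookkeeping becomes unwieldy, is to bypass simple elimination orderings entirely and instead verify the forbidden-subgraph characterization directly: chordality is inherited from the Lafond-Seamone graph (adding the single edge~$au_3$ cannot create an induced hole), and the absence of an induced $k$-\textsc{sun} for every $k\ge 3$ is a finite check because the clique number of~$\widehat{H}$ caps the feasible values of~$k$, so only a handful of small configurations need to be ruled out by inspection.
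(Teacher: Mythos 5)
Your proposal is correct, and its core coincides with the paper's: an explicit Hamiltonian cycle, plus the observation that $v_1,\dots,v_5$ are simple and that $\widehat{H}-\{v_1,\dots,v_5\}$ is the path $abcdefg$ together with three universal vertices. Where you differ is in how you finish the strong-chordality argument: you assemble a full simple elimination ordering (peel off $v_1,\dots,v_5$, then the path vertices from an endpoint, then the clique $\{u_1,u_2,u_3\}$) and invoke Farber's characterization, whereas the paper deletes the $v_i$ using the fact that simple vertices lie in no hole or $k$-\textsc{sun} and then applies Observation~\ref{prop:universal-strongly} (universal-vertex additions preserve strong chordality) to the residual graph. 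Your route is slightly more self-contained, at the price of checking the chain condition at each elimination step; note that once each $v_i$ is verified simple in $\widehat{H}$ itself (which holds, e.g.\ $N[a]\subseteq N[u_2]$ for $v_1$), no further case analysis is needed for that initial segment, since simplicity of a vertex is preserved under deletion of other vertices, so any order of $v_1,\dots,v_5$ works. One caution about your fallback route: the parenthetical claim that adding the edge $au_3$ ``cannot create an induced hole'' is not a general principle --- adding an edge to a chordal graph can create a hole (close a $P_4$ into a $C_4$) --- so that alternative would require an actual verification that no hole through $au_3$ arises; since your main route is sound, this does not affect the proposal.
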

		
		\begin{proof}
			The graph~$\widehat{H}$ is Hamiltonian since $a v_1 u_2 v_2 g f v_4 u_3 v_3 b c d e v_5 u_1 a$ is a Hamiltonian cycle of~$\widehat{H}$.
			It remains to prove that~$\widehat{H}$ is a strongly chordal graph.
			It is fairly easy to check that the vertices $v_1$,~$v_2$,~$v_3$,~$v_4$, and~$v_5$ are simple vertices in~$\widehat{H}$. Let  $H'=\widehat{H} - \{v_1, v_2, v_3, v_4, v_5\}$. Since simple vertices are not in any induced holes and $k$-{\textsc{sun}}s for all $k\geq 3$, it suffices to prove that~$H'$ is a strongly chordal graph. To this end, observe that~$H'$ consists of a $7$-vertex path $abcdefg$ and three universal vertices~$u_1$,~$u_2$, and~$u_3$. A path is clearly a strongly chordal graph. Then, by Observation~\ref{prop:universal-strongly},~$H'$ is a strongly chordal graph too.
		\end{proof}
		
		The vertices~$v_1$,~$v_2$,~$v_3$,~$v_4$, and~$v_5$ are all of degree-$2$ in $\widehat{H}$. We call the edges incident to them \emph{heavy edges} (see Figure~\ref{fig:base-graphs}). %We have the following two observation and lemma.
		\begin{observation}\label{obs:heavy-edges-cycle}
			Every cycle in $\widehat{H}$ containing $\{v_1,v_2,v_3,v_4,v_5\}$ must contain all the heavy edges.
		\end{observation}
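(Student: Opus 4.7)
The plan is to reduce the claim to a simple degree-counting argument. Recall that in any cycle $C$, every vertex of $C$ has exactly two neighbors along $C$. I would first observe, by inspection of Figure~\ref{fig:base-graphs}, that each of $v_1, v_2, v_3, v_4, v_5$ has degree exactly $2$ in $\widehat{H}$. Concretely, one can read off the two neighbors of each $v_i$ from the figure (for example, $v_1$ is adjacent to $a$ and $u_2$, $v_5$ is adjacent to $e$ and $u_1$, etc.), and these are precisely the ten heavy edges.

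Next I would argue that if a cycle $C$ contains a vertex $v$ of degree $2$ in the ambient graph, then both edges incident to $v$ must belong to $C$. This is because the two neighbors of $v$ in $C$ must be chosen among $N(v)$, and if $|N(v)| = 2$ then there is no choice: both edges of the form $v w$ with $w \in N(v)$ are forced into $C$.

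Applying this fact to each $v_i$ with $i \in [5]$ yields that all edges incident to $\{v_1,\dots,v_5\}$ lie in $C$. Since the set of heavy edges is by definition exactly the set of edges incident to the degree-$2$ vertices $v_1, \dots, v_5$, the conclusion follows. I do not expect any obstacle here; the entire proof is a one-line consequence of the degree-$2$ observation, and its sole purpose is to be invoked repeatedly in the later analysis of cycles in $\widehat{H}$ that witness non-extendability.
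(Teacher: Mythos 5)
Your proof is correct and matches the paper's reasoning: the paper defines the heavy edges as precisely the edges incident to the degree-$2$ vertices $v_1,\dots,v_5$ and treats the observation as an immediate consequence of the fact that a cycle through a degree-$2$ vertex must use both of its incident edges. Nothing further is needed.
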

		
		\begin{lemma}\label{lem:heavy-edges-equal}
			$C$ is a cycle containing all heavy edges in~$\widehat{H}$ if and only if~$C$ is a cycle containing all heavy edges in $\widehat{H}^-$.
		\end{lemma}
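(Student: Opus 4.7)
My plan is to handle the two directions separately. The backward direction should be immediate: $\widehat{H}^-$ is by construction a subgraph of $\widehat{H}$ in which every heavy edge is retained (only nonheavy edges at $u_2$, $u_3$ and the edge $u_1 e$ are removed), so any cycle of $\widehat{H}^-$ that contains all heavy edges is, verbatim, a cycle of $\widehat{H}$ containing all heavy edges.

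For the forward direction, I take a cycle $C$ in $\widehat{H}$ containing every heavy edge and show that $C$ avoids every deleted edge, hence is a cycle of $\widehat{H}^-$. The deleted edges split into two groups, which I will treat with two separate arguments. For the nonheavy edges at $u_2$ and $u_3$, a direct degree count suffices: among the heavy edges, both $v_1 u_2$ and $u_2 v_2$ are incident to $u_2$ and lie on $C$, so they already account for the two $C$-incidences at $u_2$, which blocks every other edge at $u_2$ from belonging to $C$. The same reasoning applied to $u_3$ via the heavy edges $v_3 u_3$ and $u_3 v_4$ rules out all nonheavy edges at $u_3$.

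The delicate case is the edge $u_1 e$, because $u_1$ is incident to only one heavy edge ($v_5 u_1$) and $e$ likewise to only one ($e v_5$), so the above degree count does not on its own forbid $u_1 e \in C$. Here I plan to argue by contradiction: if $u_1 e$ lay on $C$, then the two $C$-edges at $u_1$ would be $v_5 u_1$ and $u_1 e$, at $e$ they would be $e v_5$ and $u_1 e$, and at $v_5$ they are $e v_5$ and $v_5 u_1$. Hence $\{v_5 u_1,\, u_1 e,\, e v_5\}$ forms a closed subwalk in which each of $u_1$, $v_5$, $e$ already has both of its $C$-edges, forcing $C$ itself to coincide with the triangle on $\{u_1, v_5, e\}$. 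But $C$ must also contain the heavy edge $a v_1$, so $a$ and $v_1$ lie on $C$ outside this triangle, a contradiction. This triangle-forcing step is the only real obstacle; the remaining verifications are routine bookkeeping based on the fact that $v_1,\dots,v_5$ have degree two, so their two incident edges in $C$ are already forced to be the heavy ones.
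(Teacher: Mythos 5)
Your proof is correct and follows essentially the same route as the paper: the ``if'' direction via the subgraph relation, and the ``only if'' direction by noting that the two heavy edges at each of $u_2$ and $u_3$ already use up their cycle-incidences, and that $e v_5, v_5 u_1 \in E(C)$ excludes $u_1 e$. Your triangle-forcing argument for the $u_1 e$ case merely spells out the step the paper states as immediate, so there is no substantive difference.
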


		\begin{proof}
			Note that the heavy edges of~$\widehat{H}$ and~$\widehat{H}^-$ are the same. The ``if'' direction is trivial since $\widehat{H}^-$ is a subgraph of~$\widehat{H}$.
			For the ``only if'' direction, assume that~$C$ is a cycle containing all heavy edges in~$\widehat{H}$. As $ev_5, u_1v_5 \in E(C)$, we have that $u_1e \notin E(C)$. As $v_1u_2, v_2u_2 \in E(C)$, the other edges incident to~$u_2$, which are all nonheavy edges, are not in~$C$. For the same reason, all nonheavy edges incident to~$u_3$ are not in~$C$. Recall that~$\widehat{H}^-$ is obtained from~$\widehat{H}$ by the deletion of~$u_1e$ and the nonheavy edges incident to~$u_2$ and~$u_3$. Therefore, $E(C) \subseteq E(\widehat{H}^-)$, which means that~$C$ is a cycle of~$\widehat{H}^-$.
		\end{proof}
		
		Now we study a nonextendable cycle in $\widehat{H}$.
		
		\begin{lemma}\label{lemma:exceptional-cycle}
			The cycle $C = a v_1 u_2 v_2 g u_1 v_5 e f v_4 u_3 v_3 b a$ in~$\widehat{H}$ is not extendable.
		\end{lemma}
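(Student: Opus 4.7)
The plan is to argue by contradiction. Since $V(\widehat{H})\setminus V(C)=\{c,d\}$, any extension of $C$ must be a cycle $C'$ in $\widehat{H}$ with vertex set $V(C)\cup\{x\}$ for some $x\in\{c,d\}$. The first key step is to reduce from $\widehat{H}$ to the much sparser graph $\widehat{H}^-$: because $C'$ contains all of $v_1,\ldots,v_5$, Observation~\ref{obs:heavy-edges-cycle} forces every heavy edge into $C'$, and then Lemma~\ref{lem:heavy-edges-equal} lets us regard $C'$ as a cycle of $\widehat{H}^-$. This reduction is the essential conceptual move, since $\widehat{H}^-$ has many forced degree-$2$ vertices.

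The second step is to extract the mandatory subpaths of $C'$. In $\widehat{H}^-$ every $v_i$, as well as $u_2$ and $u_3$, is incident only to heavy edges, hence has degree exactly $2$. Consequently, $C'$ must contain the three forced subpaths $a\,v_1\,u_2\,v_2\,g$, $b\,v_3\,u_3\,v_4\,f$, and $e\,v_5\,u_1$. Contracting each of these to a ``virtual'' edge ($ag$, $bf$, $eu_1$) reduces the task to finding a Hamiltonian cycle on the seven-vertex set $\{a,b,e,f,g,u_1,x\}$ inside the subgraph of $\widehat{H}^-$ induced by $\{a,b,c,d,e,f,g,u_1\}$, augmented with the three virtual edges that must be used. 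In this reduced picture the only available edges are the path edges of $abcdefg$ lying in the chosen vertex set together with the edges from $u_1$ to $\{a,b,c,d,f,g\}$ (recall that $u_1e$ is absent in $\widehat{H}^-$).

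The last step is a short case analysis on $x\in\{c,d\}$ driven by further degree-$2$ forcings. For $x=c$: the vertex $e$ has only the two edges $ef$ and $eu_1$ available, so both are forced; combined with the virtual edge $bf$ this saturates $f$, which then forces $gu_1$ and $ab$, after which $c$ (whose only possible partners are $b$ and $u_1$, both already saturated) is stranded. For $x=d$: the vertex $d$ is forced to take $de$ and $du_1$; propagating analogously singles out the edge set $\{ag,bf,eu_1,de,du_1,fg,ab\}$, which however decomposes into the two vertex-disjoint cycles $a\,g\,f\,b\,a$ and $d\,e\,u_1\,d$ rather than a single cycle. Either case contradicts the existence of $C'$, proving non-extendability. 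I expect this final degree-$2$ propagation to be the main, albeit elementary, obstacle; the real content lies in the reduction to $\widehat{H}^-$ and the identification of the three forced subpaths, after which the finite check is mechanical.
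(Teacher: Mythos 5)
Your proposal is correct and takes essentially the same route as the paper: reduce to $\widehat{H}^-$ via Observation~\ref{obs:heavy-edges-cycle} and Lemma~\ref{lem:heavy-edges-equal}, split into the two cases according to whether $c$ or $d$ is the added vertex, and derive a contradiction from edges forced by degree-$2$ vertices. Your contraction of the three heavy paths into virtual edges is just a repackaging of the paper's direct argument, which in the $x=d$ case finds the forced $4$-cycle $du_1v_5ed$ (your $d\,e\,u_1\,d$ after contraction) and in the $x=c$ case finds a forced $9$-cycle where you instead strand $c$ --- both propagations are valid.
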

		
		\begin{proof}
			Note that the cycle~$C$ contains all vertices of~$\widehat{H}$ except the two vertices~$c$ and~$d$.
			Suppose for contradiction that~$C$ admits an extension~$C'$. Clearly,~$C'$ contains $\{v_1,v_2,v_3,v_4,v_5\}$, and it holds that~$C'$ is a Hamiltonian cycle of either $\widehat{H} - c$ or $\widehat{H} - d$. By Observation~\ref{obs:heavy-edges-cycle} and Lemma~\ref{lem:heavy-edges-equal},~$C'$ is either a Hamiltonian cycle of $\widehat{H}^- - c$ or a Hamiltonian cycle of $\widehat{H}^- - d$ and, moreover,~$C'$ contains all heavy edges.

			If~$C'$ is a Hamiltonian cycle of $\widehat{H}^- - c$, then as the vertex~$d$ has degree~$2$ in  $\widehat{H}^- - c$, the two edges~$du_1$ and~$de$ incident to~$d$ are contained in the cycle~$C'$. However, these two edges together with the heavy edges incident to~$v_5$ form a cycle $d u_1 v_5 e d$ of length four, contradicting that~$C'$ is a Hamiltonian cycle of~$\widehat{H}^- - c$.
			
			Finally, if~$C'$ is a Hamiltonian cycle of~$\widehat{H}^- - d$, then as~$c$ and~$e$ are both degree-$2$ vertices in~$\widehat{H}^- - d$,~$C'$ must contain the  edges~$cb$,~$u_1c$, and~$fe$. However, these edges together with the heavy edges~$bv_3$,~$v_3u_3$,~$u_3v_4$,~$v_4f$,~$ev_5$, and~$v_5u_1$ form a cycle $b v_3 u_3 v_4 f e v_5 u_1 c b$ of length nine, contradicting that~$C'$ is a Hamiltonian cycle of $\widehat{H}^- - d$.
		\end{proof}
		
Now we are ready to give the proof of Theorem~\ref{thm:counter-k-connected}. We first show the proof for the special case where $k=0$. The proof for the case where $k\geq 1$ is built upon the proof for $k=0$ by adding a number of vertices, and is given subsequently.

\bigskip

\begin{proof}[Proof of Theorem~\ref{thm:counter-k-connected} for $k=0$]
Let~$\ss$ be a nonempty set of positive integers, and let~$t$ be the maximum integer in~$\ss$. In the following, we show that for every integer $n\geq 14+t$, there exists a $2$-connected Hamiltonian strongly chordal graph with~$n$ vertices that is not $[t]$-cycle extendable, and hence not $\ss$-cycle extendable. To this end, we modify the graph~$\widehat{H}$ into a graph~$\gg$ as follows: we first replace the edge~$cd$ by a path $P = c z_1 z_2 \cdots z_{t-1} d$ of $t+1$ vertices, and then we add edges so that each~$z_i$, where $i\in [t-1]$, is adjacent to~$u_1$,~$u_2$, and~$u_3$. For convenience, we use~$z_0$ to denote~$c$, use~$z_{t}$ to denote~$d$, and define~$Z = \{z_0, \ldots, z_t\}$. Let $\gg^-$ be the graph obtained from~$\gg$ by deleting all nonheavy edges incident to~$u_2$ and~$u_3$, and deleting the edge~$eu_1$. Clearly, both~$\gg$ and~$\gg^-$ contain exactly $14+t$ vertices.
			See Figure~\ref{fig:t-extend-base-graphs} for illustrations of~$\gg$ and~$\gg^-$.
			
			The graph~$\gg$ is Hamiltonian since $a v_1 u_2 v_2 g f v_4 u_3 v_3 b z_0 \cdots z_t e v_5 u_1 a$ is a Hamiltonian cycle of~$\gg$.
			Analogous to the proof of Lemma~\ref{lem:TwoSCGraphs}, it can be shown that~$\gg$ is a strongly chordal graph. In particular, it is easy to see that every~$v_i$, where $i\in [5]$, is a simple vertex in~$\gg$. Then, it suffices to show that the graph~$\gg$ without the five vertices~$v_1$, $\dots$, $v_5$ is a strongly chordal graph. This is the case as $\gg-\{v_1, \dots, v_5\}$ consists of a path $a b z_0 z_1 \cdots z_t e f g$ (which is strongly chordal) and three universal vertices~$u_1$,~$u_2$, and~$u_3$.

			Observation~\ref{obs:heavy-edges-cycle} and Lemma~\ref{lem:heavy-edges-equal} also apply to~$\gg$. That is, the following conditions are fulfilled by~$\gg$.
			\begin{itemize}%[(i)]
				\item Every cycle in~$\gg$ containing $\{v_1,v_2,v_3,v_4,v_5\}$ contains all heavy edges.
				\item $C$ is a cycle containing all heavy edges in~$\gg$ if and only if~$C$ is a cycle containing all heavy edges in~$\gg^-$.
			\end{itemize}
We show now that the cycle
\[C = a v_1 u_2 v_2 g u_1 v_5 e f v_4 u_3 v_3 b a\]
in~$\gg^-$ is not $\{i\}$-extendable for all $i\in [t]$.
			We prove this by induction.
			For the base case where $i=1$, our proof goes as follows. Assume for the sake of contradiction that~$C$ can be extended to~$C'$ such that $V(C') = V(C) \cup \{z\}$ where $z\in Z$.
			Since~$z_j$ where $j\in [t-1]$ has degree one in $\gg^{-} - (Z \setminus \{z_j\})$, it holds that $z \notin Z \setminus \{z_0, z_t\}$. Note that $\gg^{-} - (Z \setminus \{z_0\})$ is isomorphic to~$\widehat{H}^- - d$, and $\gg^{-} - (Z \setminus \{z_t\})$ is isomorphic to $\widehat{H}^- - c$. Then, by Lemma~\ref{lemma:exceptional-cycle},~$z$ can neither be $z_0$ nor $z_t$. This completes the proof for the base case.

			Now assuming that $i>1$ and~$C$ is not $[i-1]$-extendable, we claim that the cycle~$C$ in~$\gg^-$ is not $\{i\}$-extendable.
			We prove this by contradiction. Assume for the sake of contradiction that~$C^*$ is an $\{i\}$-extension of~$C$ in~$\gg^-$ such that $V(C^*)=V(C) \cup Z'$ for some $Z' \subseteq Z$ with $|Z'| = i$.
			Let $Z^- = Z \setminus Z'$.
			We claim that $\gg^- - (Z \setminus \{z_0, z_t\})$ is not Hamiltonian: In $\gg^- - (Z \setminus \{z_0, z_t\})$,~$z_t$ and~$v_5$ are degree-$2$ vertices; a Hamiltonian cycle of $\gg^- - (Z \setminus \{z_0, z_t\})$ must contain~$z_t$ and~$v_5$, and hence contains the four edges~$ez_t$,~$z_tu_1$,~$u_1v_5$, and~$v_5e$, which, however, form a cycle of length four.
			By the claim, $Z' \neq \{z_0, z_t\}$.
			Note that vertices of~$Z$ are all degree-$3$ vertices in~$\gg^-$, $Z \setminus Z' \neq \emptyset$ (because $|Z|=t+1$ and $i\leq t$), and every vertex of~$Z'$ in~$C^*$ has degree~$2$. Hence, there exists at least one vertex of~$Z'$ in $\gg^- - Z^-$ with degree~$2$, and there exists no vertex of~$Z'$ in $\gg^- - Z^-$ with degrees~$0$ or~$1$. As $Z' \neq \{z_0, z_t\}$, there are two vertices $z, z' \in Z'$ such that~$z$ has degree~$2$ in $\gg^- - Z^-$ with~$z'$ and~$u_1$ being its two neighbors. It is clear that $z'u_1 \in E(\gg^- - Z^-)$. By replacing the edges~$zz'$ and~$zu_1$ with the edge~$z'u_1$ in~$C^*$ we get a cycle on $V(C^*) \setminus \{z\}$, an $\{ i-1 \}$-extension of~$C$, a contradiction.
			
			A Hamiltonian strongly chordal graph of~$n> 14+t$ vertices which is not $\ss$-cycle extendable can be obtained from the graph~$\gg$ by adding $n-14-t$ true-twins of~$v_1$.
\end{proof}

\begin{figure}[h]
\centering
{
\includegraphics[width=\textwidth]{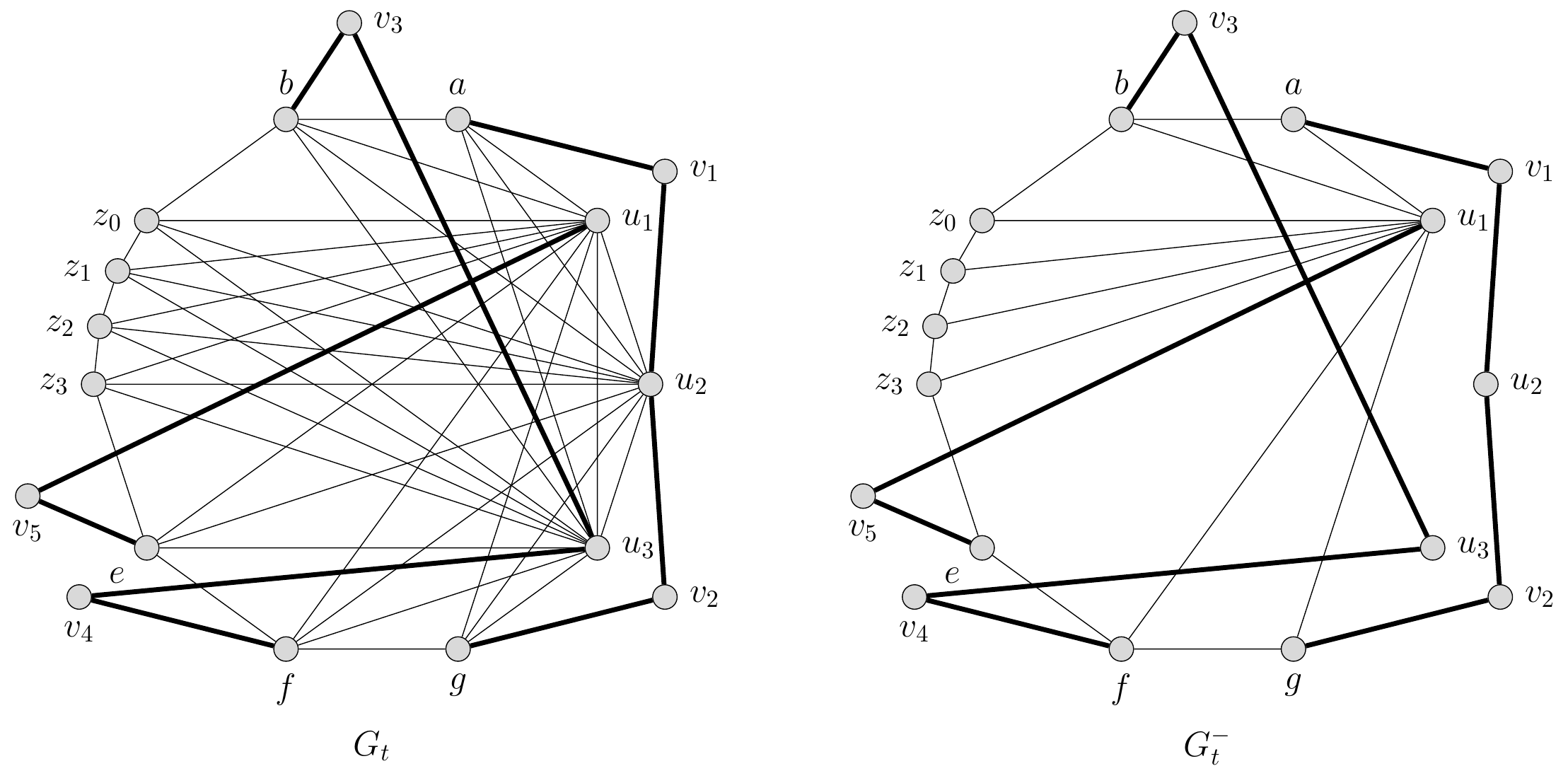}
\caption{Two graphs~$\gg$ and~$\gg^-$ in the proof of Theorem~\ref{thm:counter-k-connected} for $k =0$ and $t=3$.}
\label{fig:t-extend-base-graphs}
}
\end{figure}

		Now, we move on to the proof of Theorem~\ref{thm:counter-k-connected} for the case where $k\geq 1$. We need a few additional notions for our exposition.
		For two positive integers~$p$ and~$q$, a \emph{$(p, q)$-star} is a graph whose vertex set can be partitioned into an independent set~$X$ of~$p$ vertices and a clique~$Y$ of~$q$ vertices such that all vertices in~$X$ are adjacent to all vertices in~$Y$. Moreover, such a partition $(X, Y)$ is called the $(p, q)$-partition of the graph.

\begin{figure}[h]
	\centering
	{
		\includegraphics[width=0.85\textwidth]{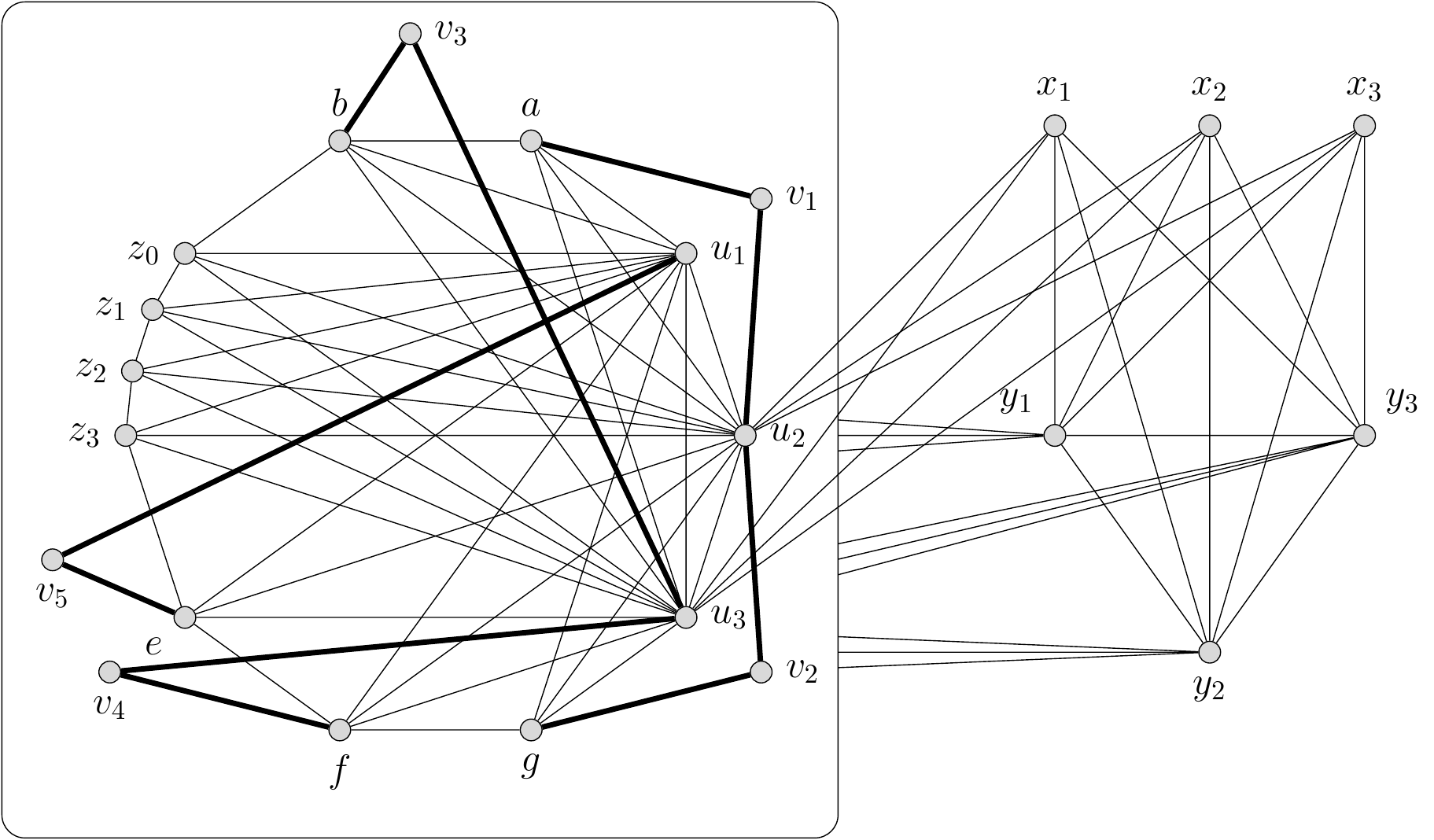}
		\caption{The graph~$\gg^3$. The set of vertices in the left box induces a~$\gg$ for $t=3$, and $\{x_1,x_2,x_3,y_1,y_2,y_3\}$ induces a $(3,3)$-star. Multiedges between a vertex and the left box means that the vertex is adjacent to all vertices in the box.}
		\label{fig:k-connected-a}
	}
\end{figure}

\begin{figure}[h]
	\centering
	{
		\includegraphics[width=0.5\textwidth]{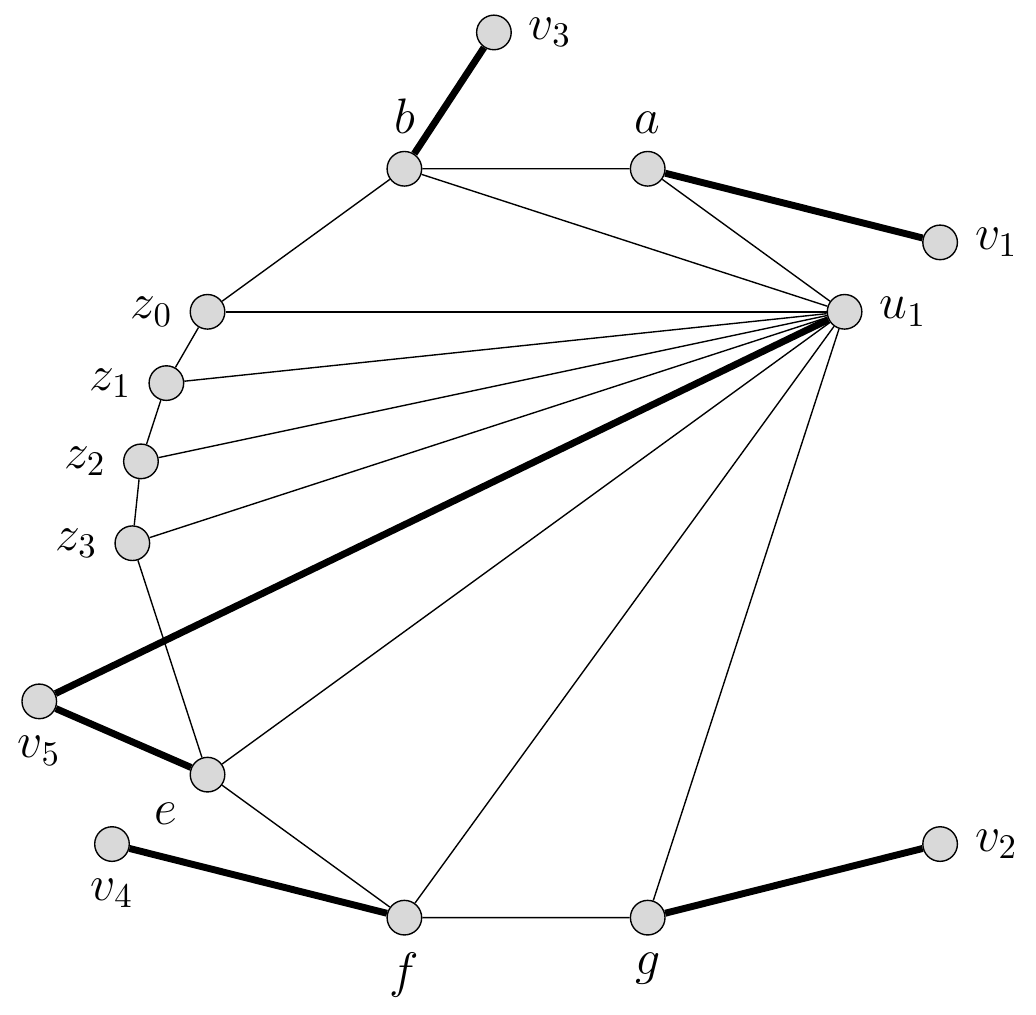}
	}
	\caption{The graph $G_t - \{u_2, u_3\}$ where $t=3$.}
	\label{fig:bbb}
\end{figure}

		\bigskip
		\begin{proof}[Proof of Theorem~\ref{thm:counter-k-connected} for $k\geq 1$]
	Let $\gg$ and $Z = \{z_0, \ldots, z_t\}$ be as defined in the above proof for $k = 0$.
	We construct a graph~$\gg^k$ of $14 + t + 2k$ vertices as follows:
	\begin{enumerate}
		\item take the union of~${\gg}$ and a $(k, k)$-star with the $(k, k)$-partition $(X, Y)$;
		\item add an edge between every vertex in~$Y$ and every vertex in~${\gg}$;
		\item add an edge between~$u_2$ and each vertex of~$X$; and
		\item add an edge between~$u_3$ and each vertex of~$X$.
	\end{enumerate}	
We refer to Figure~\ref{fig:k-connected-a} for an illustration of~$\gg^3$.

We first show that the graph~$\gg^k$ is a $(2+k)$-connected Hamiltonian strongly chordal graph.
\begin{itemize}
\item {\bf{$\gg^k$ is $(2+k)$-connected}}

 As $|Y| = k$ and all vertices in~$Y$ are universal vertices, this is equivalent to showing that $\gg^k-Y$ is $2$-connected, i.e., $\gg^k-Y$ is connected after deleting any arbitrary vertex. This is the case as~${\gg}$ has a Hamiltonian cycle and, moreover, every vertex in~$X$ is adjacent to two vertices in~${\gg}$.

\item {\bf{$\gg^k$ is Hamiltonian}}

Let $x_1 x_2 \cdots x_k$ and  $y_1 y_2 \cdots y_k$ be any arbitrary but fixed orders over~$X$ and~$Y$ respectively. The following Hamiltonian cycle is an evidence that~${\gg^k}$ is Hamiltonian:
	\[ a v_1 u_2 x_1 y_1 x_2 y_2 \cdots x_k y_k v_2 g f v_4 u_3 v_3 b z_0 z_1 \cdots z_t e v_5 u_1 a.\]	

\item {\bf{$\gg^k$ is a strongly chordal graph}}

After deleting all the universal vertices in~$Y$ from~$\gg^k$,~$v_1$,~$v_2$, $v_3$, $v_4$, $v_5$ are all simple vertices. After deleting $Y \cup \{v_1, v_2, v_3, v_4, v_5\}$ from~$\gg^k$, vertices of~$X$ are all simple vertices.
	As simple vertices and universal vertices of a graph never participate in any induced holes or induced $j$-{\textsc{sun}}s for all integers $j\geq 3$, it holds that~$\gg^k$ is a strongly chordal graph if $\gg^k - (X\cup Y \cup \{v_1, v_2, v_3, v_4, v_5\})=\gg-\{v_1, v_2, v_3, v_4, v_5\}$ is, which is the case as shown in the proof for the case where $k = 0$.
\end{itemize}
	
Now we prove that there is a cycle~$C$ in~$\gg^k$ which is not $[t]$-extendable by contradiction. In particular, let
	\[C = a v_1 u_2 x_1 y_1 x_2 y_2 \cdots x_k y_k v_2 g u_1 v_5 e f v_4 u_3 v_3 b a.\]
	It is clear that~$C$ contains all vertices of~$\gg^k$ except $z_0, z_2, \dots, z_t$.
	Suppose for contradiction that~$C$ is $[t]$-extendable in~$\gg^k$.
	Note that $|Z| = t+1$. Then there exists a non-Hamiltonian cycle~$C'$ of~$\gg^k$ such that $V(C) \subset V(C')$ and, more precisely,~$V(C')$ is composed by $V(C)$ and a nonempty proper subset of~$Z$.  Let $J = \{u_2, u_3\}\cup Y$. Clearly, $J \subset V(C')$.
	The graph $\gg^k - J$ has exactly $k+1$ connected components, i.e., $\gg - \{u_2, u_3\}$ and the~$k$ isolated vertices in~$X$. We refer to Figure~\ref{fig:bbb} for an illustration of $\gg - \{u_1, u_2\}$. Then, as~$v_1$,~$v_2$,~$v_3$, and~$v_4$ are degree-$1$ vertices in $\gg - \{u_2, u_3\}$, and $|J| = k+2$, we know that  removing~$J$ from~$C'$ yields exactly $k+2$ paths, where~$k$ of them one-to-one correspond to the vertices of~$X$, and the other two paths, denoted~$P_1$ and~$P_2$, are vertex-disjoint and each contains exactly two of~$v_1$,~$v_2$,~$v_3$, and~$v_4$ as the ends. By the above discussion, $V(P_1)\cup V(P_2)$ consist of vertices in $\gg - Z - \{u_1,u_2\}$ and a nonempty proper subset of~$Z$.
	As~$v_5$ has degree two in $\gg - \{u_2, u_3\}$, with~$u_1$ and~$e$ being its two neighbors, we know that~$u_1$,~$v_5$, and~$e$ appear consecutively in one of~$P_1$ and~$P_2$. Due to symmetry, we need only to consider the following three cases. We shall show that all the three cases lead to some contradictions. Bear in mind that in the paths~$P_1$ or~$P_2$, the neighbor of~$v_1$ is~$a$, of~$v_2$ is~$g$, of~$v_3$ is~$b$, and of~$v_4$ is~$f$.
\begin{description}
\item[Case~1: $P_1$ is a $v_1$-$v_2$ path and $P_2$ is a $v_3$-$v_4$ path] \hfill

Due to the above discussion, in this case, the second and third vertices of~$P_1$, starting from~$v_1$, must be~$a$ and~$u_1$, respectively. Additionally, the second-last and the third-last vertices of~$P_1$ must be~$g$ and~$f$, respectively. However, this contradicts that~$f$ is the neighbor of~$v_4$ in~$P_2$, and~$P_1$ and~$P_2$ are vertex-disjoint.

\item[Case~2: $P_1$ is a $v_1$-$v_3$ path and $P_2$ is a $v_2$-$v_4$ path] \hfill

In this case, the third vertex of~$P_1$, starting from~$v_1$, is either~$b$ or~$u_1$.

We consider first the former case. In this case, $P_1=v_1 a b v_3$. Clearly,~$P_2$ cannot be $v_2 g f v_4$, since otherwise $Z\cap (V(P_1)\cup V(P_2))=\emptyset$, contradicting that $V(P_1)\cup V(P_2)$ contains a proper subset of~$Z$. Moreover, we know that~$u_1$ is in~$P_2$. Then, by the above discussion, the first five vertices of~$P_2$ are respectively~$v_2$,~$g$,~$u_1$,~$v_5$, and~$e$. The next vertex in~$P_2$ can be either~$f$ or~$z_t$. If it is~$f$, then $P_2=v_2 g u_1 v_5 e f v_4$. However, in this case $Z\cap (V(P_1)\cup V(P_2))=\emptyset$, a contradiction. So, it remains only the possibility that the sixth vertex of~$P_2$ is~$z_t$. However, it is easy to see that there is no path from~$z_t$ to~$v_4$ in the graph $\gg-(\{u_2, u_3, v_2, g, u_1, v_5, e\}\cup V(P_1))$, contradicting that~$P_2$, containing~$v_2$,~$g$,~$u_1$,~$v_5$, and~$e$, is a $v_2$-$v_4$ path that is vertex-disjoint from~$P_1$ in $\gg-\{u_2, u_3\}$.

Let us consider the latter case now. In light of the above discussions, the first five vertices of~$P_1$ must be $v_1$, $a$, $u_1$, $v_5$ and~$e$. The next vertex is either~$f$ or~$z_t$. It cannot be~$f$ because~$f$ is the neighbor of~$v_4$ in~$P_2$. However, it cannot be~$z_t$ either:  as $z_t z_{t-1} \cdots z_0 b v_3$ is the only $z_t$-$v_3$ path in $\gg-\{u_2, u_3, v_1, a, u_1, v_5, e\}$, it holds that $Z\subseteq V(P_1)$, which contradicts that $V(P_1)\cup V(P_2)$ contains a proper subset of~$Z$.

\item[Case~3: $P_1$ is a $v_1$-$v_4$ path and $P_2$ is a $v_2$-$v_3$ path] \hfill

As~$b$ is the neighbor of~$v_3$ in~$P_2$, we know that the third vertex in~$P_1$, starting from~$v_1$, must be~$u_1$. Recall also that~$f$ is the neighbor of~$v_4$ in~$P_1$. However, as $\{u_1, f\}$ separate~$v_2$ and~$v_3$,~$P_1$ and~$P_2$ cannot be two vertex-disjoint paths in $\gg-\{u_2, u_3\}$, a contradiction.
\end{description}
This completes the proof that~$C$ is not $[t]$-cycle extendable, and hence~$\gg^k$ is not $\ss$-cycle extendable.
	
	A Hamiltonian strongly chordal graph of~$n> 14+t+2k$ vertices which is not $\ss$-cycle extendable can be obtained from the graph~$\gg^k$ by adding $n-14-t-2k$ true-twins of~$v_1$.
		\end{proof}

\section{Affirmative Results}\label{sec-positive}
		
		This section is devoted to the proofs of Theorems~\ref{thm:fan-k5} and~\ref{thm:fan-free-power}.
		The following lemmas are from \cite{LafondS15}.
		
		\begin{lemma}\label{prop:common-neighbor} \cite{LafondS15}
			Let $C$ be a cycle of a chordal graph and let~$uv$ be an edge in~$C$. Then~$u$ and~$v$ have a common neighbor in~$V(C)$.
		\end{lemma}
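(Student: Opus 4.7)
The plan is to prove the statement by restricting attention to the induced subgraph on the vertices of $C$ and then considering a shortest cycle through $uv$ in that subgraph. Let $G$ denote the ambient chordal graph, and let $G' = G[V(C)]$. Note that $G'$ is chordal since induced subgraphs of chordal graphs are chordal, and $C$ itself is a cycle of $G'$ that uses the edge $uv$. I would therefore select $C^*$ to be a cycle of $G'$ of minimum length among all cycles of $G'$ that contain the edge $uv$. Such a cycle exists because $C$ is a candidate. My goal is to show $|V(C^*)|=3$; then the third vertex of $C^*$ is a vertex of $V(C)$ adjacent to both $u$ and $v$, which is precisely the conclusion.

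To prove $|V(C^*)|=3$, I would argue by contradiction: suppose $C^* = u\, v\, w_1 w_2 \cdots w_m\, u$ with $m \ge 2$, so $|V(C^*)| \ge 4$. Since $G$ is chordal and $C^*$ is a cycle of $G$ of length at least four, $C^*$ is not induced in $G$, so it has a chord $e$. I would split into cases according to where $e$ sits:
\begin{enumerate}
\item $e = u w_i$ with $2 \le i \le m-1$: then the cycle $u\, v\, w_1 \cdots w_i\, u$ contains $uv$ and has length $i+2 < m+2$.
\item $e = v w_i$ with $2 \le i \le m$: then the cycle $u\, v\, w_i\, w_{i+1} \cdots w_m\, u$ contains $uv$ and has length $m-i+3 < m+2$.
\item $e = w_i w_j$ with $1 \le i < j \le m$ and $j - i \ge 2$: then the cycle $u\, v\, w_1 \cdots w_i\, w_j\, w_{j+1} \cdots w_m\, u$ contains $uv$ and has length $m+2-(j-i-1) < m+2$.
\end{enumerate}
In each case I produce a shorter cycle in $G'$ through the edge $uv$, contradicting the minimality of $C^*$. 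Hence $|V(C^*)|=3$ as desired.

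The argument is essentially a standard minimal-cycle triangulation argument, so I do not anticipate serious obstacles; the main thing to be careful about is making the case analysis for the chord exhaustive and checking each time that the shorter cycle I produce still contains the edge $uv$, which is what distinguishes this from the usual chordality statement that every long cycle has a chord.
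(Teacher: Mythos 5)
Your argument is correct and is the standard minimal-cycle/chord-splitting proof; note that the paper does not prove this lemma itself but cites \cite{LafondS15}, where essentially the same argument (a chord of a shortest cycle through $uv$ would yield a shorter cycle through $uv$) is used. One small slip: among chords incident to $u$ you list only $uw_i$ with $2\le i\le m-1$ and omit the possible chord $uw_1$, but that case is immediate, since then $u\,v\,w_1\,u$ is a triangle through $uv$ (so $w_1$ is already the desired common neighbor), and with this subcase added your enumeration is exhaustive.
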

		
		\begin{lemma}\label{prop:simplicial-vertex} \cite{LafondS15}
			Let~$G$ be a connected chordal graph, and let $S$ be a clique of~$V(G)$. If $G - S$ is disconnected, then each connected component of $G - S$ contains a simplicial vertex of~$G$.
		\end{lemma}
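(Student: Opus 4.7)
The plan is to fix an arbitrary connected component $K$ of $G-S$ and produce a vertex of $V(K)$ that is simplicial in $G$ itself. The main ingredient is the classical fact that every chordal graph is either a complete graph or contains at least two non-adjacent simplicial vertices; this is a standard result in chordal-graph theory and is implicit in the simplicial/PEO characterization already used elsewhere in the paper.

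First I would set $H := G[V(K) \cup S]$. Since induced subgraphs of chordal graphs are chordal, $H$ is chordal. If $H$ happens to be a complete graph, then every vertex of $V(K)$ is simplicial in $H$, so pick any such vertex. Otherwise the classical fact above yields two non-adjacent simplicial vertices $v_1, v_2$ of $H$. Because $S$ is a clique, any two vertices of $S$ are adjacent, so at most one of $v_1, v_2$ can lie in $S$; hence at least one of them belongs to $V(K)$. In either case there is some $v \in V(K)$ that is simplicial in $H$.

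Second I would lift simpliciality from $H$ to $G$ using the fact that $S$ is a separator for $K$. Since $K$ is a connected component of $G - S$ and $v \in V(K)$, every neighbor of $v$ in $G$ either lies in $S$ or lies in $V(K)$; consequently $N_G(v) \subseteq V(K) \cup S = V(H)$, so $N_G(v) = N_H(v)$. Because $v$ is simplicial in $H$, the set $N_H(v)$ is a clique in $H$ and hence in $G$, so $v$ is simplicial in $G$. Applying this argument to each connected component of $G - S$ completes the proof. I do not foresee a real obstacle: the only nontrivial step is invoking the existence of two non-adjacent simplicial vertices in a non-complete chordal graph, and everything else is a short observation that $S$ being both a clique and a separator forces the neighborhoods inside $K$ to agree in $G$ and in $H$.
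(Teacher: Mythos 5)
Your argument is correct: restricting to $H=G[V(K)\cup S]$, applying Dirac's theorem that a non-complete chordal graph has two non-adjacent simplicial vertices (so at least one lies outside the clique $S$), and then observing that $N_G(v)=N_H(v)$ because $S$ separates $K$ from the rest of $G$, gives exactly what is needed. The paper itself states this lemma as a citation to Lafond and Seamone without reproducing a proof, but your route is the standard one and uses precisely the ingredients the paper has at hand (it invokes the same Dirac fact immediately afterwards to derive Corollary~\ref{cor:simplicial-vertex}), so there is nothing to object to.
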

		
		It is well-known that if a chordal graph is not a clique, then it contains two non-adjacent simplicial vertices~\cite{Dirac1961}.
		Hence, Lemma~\ref{prop:simplicial-vertex} is extendable  to the case where $G - S$ is connected.
		
		\begin{corollary}\label{cor:simplicial-vertex}
			Let~$G$ be a connected chordal graph, and let~$S$ be a clique of~$V(G)$. Then each connected component of $G - S$ contains a simplicial vertex of~$G$.
		\end{corollary}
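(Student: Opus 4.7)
The plan is to handle the two possible cases for $G - S$ separately, using Lemma~\ref{prop:simplicial-vertex} when $G - S$ is disconnected and the cited classical fact (every chordal graph that is not a clique has two non-adjacent simplicial vertices) when $G - S$ is connected. The case $V(G) = S$ is vacuous since $G - S$ then has no connected components, so I would first assume $G - S$ is nonempty.

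If $G - S$ is disconnected, the conclusion is immediate from Lemma~\ref{prop:simplicial-vertex}, so the substance of the proof lies in the case where $G - S$ is connected, i.e., there is a unique connected component to worry about, namely $G - S$ itself. I would split this remaining case into two subcases according to whether $G$ is a clique. If $G$ is a clique, then every vertex of $G$ is simplicial, and since we assumed $G - S$ is nonempty, it already contains a simplicial vertex of $G$.

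If $G$ is not a clique, I would invoke the fact from \cite{Dirac1961} that~$G$ contains two non-adjacent simplicial vertices, say~$u$ and~$v$. Since~$S$ is a clique, it cannot contain both~$u$ and~$v$, so at least one of them lies in $V(G) \setminus S$, and this vertex is a simplicial vertex of~$G$ sitting in the unique connected component of $G - S$. Combining the two cases yields the corollary.

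I do not foresee a real obstacle here: the argument is a straightforward case analysis built on a lemma and a classical theorem that are both invoked by name just above. The only subtlety worth flagging in the write-up is the boundary case $V(G) = S$, which is vacuous and should be mentioned explicitly to keep the case split clean.
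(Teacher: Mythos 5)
Your proof is correct and follows essentially the same route the paper intends: the disconnected case is Lemma~\ref{prop:simplicial-vertex}, and the connected case is handled by Dirac's result that a non-clique chordal graph has two non-adjacent simplicial vertices, at least one of which must avoid the clique~$S$. The clique and $V(G)=S$ boundary cases you flag are trivial and consistent with the paper's brief justification.
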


		\begin{lemma}\label{lemma:chordal-cycle}
			Let~$n$ be an integer such that all Hamiltonian chordal graphs of at most $n-1$ vertices are cycle extendable, and there exists a Hamiltonian chordal graph~$G$ of~$n$ vertices which is not cycle extendable.
			Let~$C$ be a nonextendable cycle of~$G$,~$Q$ be a connected component of $G-C$, and $S = N(Q)$ be the set of neighbors of~$Q$. Then the following conditions hold:
			\begin{enumerate}
				\item $S$ is not a clique of~$G$,
				\item any two vertices of~$S$ are not adjacent in~$C$, and
				\item if two vertices of~$S$ are adjacent in~$G$, then they have a common neighbor in~$Q$.
			\end{enumerate}
		\end{lemma}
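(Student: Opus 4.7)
The plan is to tackle the three items in the order (3), (2), (1): item (3) contains the main shortest-path-plus-chord argument, item (2) then follows essentially as a corollary of (3), and item (1) will invoke item (3) constructively alongside the minimality of $n$.

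For item (3), let $s, s' \in S$ be adjacent in $G$; pick $q \in N(s) \cap Q$ and $q' \in N(s') \cap Q$, which exist since $s, s' \in S = N(Q)$. If $q = q'$ it is already a common $Q$-neighbor, so assume otherwise and let $P = x_0 x_1 \cdots x_\ell$ be a shortest $q$-$q'$ path in $Q$. The cycle $D = s\, x_0 x_1 \cdots x_\ell\, s'\, s$ in $G$ has length $\ell + 3 \ge 4$, so chordality gives a chord of $D$. The shortest-path property of $P$ excludes chords $x_i x_j$ with $|i - j| \ge 2$, and $ss'$ is already an edge of $D$, so every chord of $D$ has the form $s x_i$ with $1 \le i \le \ell$ or $s' x_j$ with $0 \le j \le \ell - 1$. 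If the chord is $s x_\ell$ or $s' x_0$ we immediately read off a common $Q$-neighbor of $s$ and $s'$, namely $x_\ell$ or $x_0$; otherwise the chord truncates $D$ to a strictly shorter cycle of the same form on a proper subpath of $P$, and we recurse on $\ell$. The induction terminates with the claimed common $Q$-neighbor. Item (2) follows immediately: if $s, s' \in S$ are adjacent on $C$, they are in particular adjacent in $G$, so item (3) supplies a common $Q$-neighbor $\hat q$, and inserting $\hat q$ between $s$ and $s'$ on $C$ extends $C$, contradicting the nonextendability of $C$.

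For item (1), assume for contradiction that $S$ is a clique and let $H$ be a Hamiltonian cycle of $G$. The intersection $H \cap Q$ is a disjoint union of maximal subpaths $P_1, \ldots, P_m$ with $m \ge 1$, and each $P_i$'s $H$-neighbors $a_i, b_i$ lie in $N(Q) = S$; since $S$ is a clique, $a_i b_i \in E(G)$ for every $i$. If $Q$ is not the unique component of $G - V(C)$, then replacing every $a_i P_i b_i$ in $H$ by the edge $a_i b_i$ yields a Hamiltonian cycle of $G - Q$, a chordal Hamiltonian graph on fewer than $n$ vertices and hence cycle extendable by the minimality of $n$; since $V(C) \subsetneq V(G - Q)$, the cycle $C$ is non-Hamiltonian in $G - Q$ and extends there by some $v$, an extension also present in $G$, contradicting nonextendability. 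If $Q$ is the unique component, then $V(G) = V(C) \cup Q$; applying item (3) to the adjacent pair $a_1, b_1 \in S$ produces a common $Q$-neighbor $q^\ast$, and modifying $H$ by replacing $a_1 P_1 b_1$ with $a_1 q^\ast b_1$ and every $a_i P_i b_i$ ($i \ge 2$) by the edge $a_i b_i$ yields a simple cycle on exactly $V(C) \cup \{q^\ast\}$ --- again an extension of $C$, again a contradiction.

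The main delicacy is the simplicity check in the unique-component subcase of item (1): even if $q^\ast$ happens to lie in some $V(P_j)$ with $j \ge 2$, that $V(P_j)$ is excised during the contraction of $P_j$ and $q^\ast$ is reinserted exactly once in the treatment of $P_1$, so no vertex repeats and the construction is a genuine simple cycle.
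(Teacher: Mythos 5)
Your proof is correct, but for item~(1) it takes a genuinely different route from the paper. For items~(2) and~(3) the underlying idea is the same as the paper's (a shortest path through $Q$ together with the edge between the two attachment vertices cannot avoid creating a hole); you merely reorganize it, proving~(3) first by a chord-recursion on the cycle $s\,x_0\cdots x_\ell\,s'$ --- which works because~(3) needs only chordality, not nonextendability --- and then obtaining~(2) as an immediate corollary, whereas the paper proves~(2) directly (common neighbor would extend $C$; no common neighbor forces a hole) and repeats the hole argument for~(3). The real divergence is item~(1): the paper invokes its Corollary~\ref{cor:simplicial-vertex} (each component of $G-S$ contains a simplicial vertex of $G$, resting on the lemma of Lafond--Seamone and Dirac's theorem on simplicial vertices) to find a simplicial vertex $v\in Q$, observes that $G-v$ is still Hamiltonian by bypassing $v$ along its clique neighborhood, and applies the minimality of $n$ to $G-v$. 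You instead shortcut a Hamiltonian cycle $H$ of $G$ across the clique $S$: when $G-C$ has another component you delete all of $Q$ at once and apply minimality to the Hamiltonian chordal graph $G-Q$, and when $Q$ is the unique component you build an extension of $C$ directly from a common $Q$-neighbor of $a_1,b_1$ supplied by item~(3) (this subcase needs no minimality at all). Your route buys self-containedness --- it avoids the simplicial-vertex machinery entirely and all the splicing details you need (distinctness of the $a_i,b_i$, simplicity of the shortcut cycle, the placement of $q^\ast$) check out; the paper's route buys brevity and uniformity given the cited auxiliary lemma, since one deleted vertex handles both of your subcases at once.
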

		
		\begin{proof}
			The integer~$n$ stipulated in the lemma must exist, because any Hamiltonian chordal graph of at most four vertices are cycle extendable and there are nonextendable Hamiltonian chordal graphs of~$15$ vertices~\cite{LafondS15}.
			We prove the three statements of the lemma by contradiction.
			\begin{enumerate}
				\item For contradiction, assume that~$S$ is a clique of~$G$. Then, by Corollary~\ref{cor:simplicial-vertex},~$Q$ has a simplicial vertex, say~$v$. Then $G - v$ is Hamiltonian since~$G$ is Hamiltonian and all neighbors of~$v$ are pairwise adjacent in~$G$. Since~$C$ is not extendable in~$G$, and~$G$ is Hamiltonian, we know that~$C$ is not a Hamiltonian cycle of $G-v$. Then, as all Hamiltonian chordal graphs with vertices less than~$n$ are cycle extendable,~$C$ is extendable in $G - v$. This implies that~$C$ is also extendable in~$G$, a contradiction.
				
				\item For the sake of contradiction, assume that there are two vertices $x, y \in S$ such that $xy \in E(C)$. As $S = N(Q)$, both~$x$ and~$y$ have neighbors in~$Q$. However,~$x$ and~$y$ cannot have a common neighbor~$v$ in~$Q$: if this was the case, adding~$v$ to~$C$ and replacing the edge~$xy$ by the two edges~$xv$ and~$vy$ in~$C$ yield an extension of~$C$, contradicting that~$C$ is not extendable in~$G$.
				Now, there exist two distinct vertices~$x'$ and~$y'$ such that (1) $x'$ is a neighbor of~$x$ in~$Q$; (2)~$y'$ is a neighbor of~$y$ in~$Q$; and (3) $x'$ and~$y'$ have the shortest distance in~$G[Q]$ among all two distinct vertices fulfilling the first two conditions. However, every shortest $x'$-$y'$ path in~$G[Q]$ plus the three edges~$xx'$,~$yy'$, and~$xy$ yields a hole, a contradiction.
				
				\item Let $x,y\in S$ be two distinct vertices such that they are adjacent in~$G$. By Statement~2, it holds that $xy \notin E(C)$. Assume for contradiction that~$x$ and~$y$ do not have any common neighbors in~$Q$. Then, analogous to the above proof for Statement~2, there are two distinct vertices~$x'$ and~$y'$ in~$Q$ satisfying the same three conditions given above. However, any shortest $x'$-$y'$ path in~$G[Q]$ plus the three edges~$xx'$,~$yy'$, and~$xy$ yields a hole, a contradiction.
			\end{enumerate}
		\end{proof}

		Now, we are ready to prove Theorems~\ref{thm:fan-k5} and \ref{thm:fan-free-power}.
		
		\bigskip

		\begin{proof}[Proof of Theorem~\ref{thm:fan-k5}]
			%The statement is true for Hamiltonian $4$-leaf powers of at most~$4$ vertices.
			Suppose for contradiction that there exist cycle nonextendable Hamiltonian $4$-{\textsc{fan}}-free graphs where every induced $K_5-e$ has true twins. Let~$G$ be such a graph with the minimum number of vertices. Clearly,~$G$ contains at least five vertices. Let~$C$ be a non-Hamiltonian cycle in~$G$ which is not extendable, let~$Q$ be a connected component of $G-C$, and let $S = N(Q)$. Clearly, $S \subseteq V(C)$.
			By Lemma~\ref{lemma:chordal-cycle}~(1),~$S$ is not a clique and hence $|S|\geq 2$. Let $x, y\in S$ be two non-adjacent vertices in~$S$ with the shortest distance on~$C$ among all pairs of non-adjacent vertices in~$S$.
			Let~$P$ be a shortest $x$-$y$ path on~$C$. There must be at least one inner vertex of~$P$ that is contained in~$S$, since otherwise a shortest $x$-$y$ path in $G[Q \cup \{x,y\}]$ plus a shortest $x$-$y$ path in~$G[V(P)]$ yields a hole.
			We break the discussion into two cases.
			%\medskip
			
			\textbf{Case 1:} there is exactly one inner vertex of~$P$ contained in~$S$, say~$z$.
			
			By the selection of~$x$ and~$y$, we have that $xz, yz \in E(G)$. By Lemma~\ref{lemma:chordal-cycle}~(2), $xz, yz \notin E(C)$.	Let~$P_1$ be the path between~$x$ and~$z$ on~$P$, and~$P_2$ the path between~$y$ and~$z$ on~$P$. By Lemma~\ref{prop:common-neighbor}, there exists an inner vertex~$x'$ of~$P_1$ adjacent to~$x$ and~$z$, and an inner vertex~$y'$ of~$P_2$ adjacent to~$y$ and~$z$.
			Since~$z$ is the only inner vertex of~$P$ in~$S$, it holds that $x', y' \notin S$, i.e.,~$x'$ and~$y'$ are not adjacent to any vertex of~$Q$.
			Moreover, it holds that $xy', x'y, x'y' \notin E(G)$, since otherwise the shortest $x$-$y$ path in $G[\{ x,x',y,y' \}]$ plus a shortest $x$-$y$ path in $G[Q \cup \{x,y\}]$ is a hole.		

If~$x$,~$y$, and~$z$ have a common neighbor~$w$ in~$Q$, then $\{x',x,w,y,y',z\}$ induces a $4$\textsc{-fan}, in which~$z$ has degree~$5$, a contradiction.

Now we assume that $x$,~$y$, and~$z$ do not have any common neighbors in~$Q$. Then, by Lemma~\ref{lemma:chordal-cycle}~(3), we can find two distinct vertices~$u$ and~$v$ in~$Q$ such that (1)~$u$ is adjacent to~$x$ and~$z$ but not to~$y$, (2)~$v$ is adjacent to~$y$ and~$z$ but not to~$x$, and (3)~$u$ and~$v$ are two vertices with the shortest distance in~$G[Q]$ among all pairs of vertices in~$Q$ satisfying the first two conditions. Let~$P'$ be a shortest path between~$u$ and~$v$ in~$G[Q]$. Since $uz, vz \in E(G)$, to avoid a hole, all inner vertices of~$P'$ (if exist) are adjacent to~$z$. Moreover, by the selection of~$u$ and~$v$, all inner vertices of~$P'$ (if exist) are adjacent to neither~$x$ nor~$y$. Therefore, $x' x P' y$ is an induced path of length at least~$4$ in~$G$. Then, one can check that there are induced $4$-{\textsc{fan}}s in~$G$ (e.g., a subgraph induced by~$z$ and any five consecutive vertices in the path $x' x P' y$).

			\textbf{Case 2:} there are more than one inner vertex of~$P$ in~$S$.
			
			Let~$z$ and~$z'$ be two inner vertices of~$P$ in~$S$ such that~$z$ and~$z'$ are the nearest to~$x$ and~$y$ on~$P$ respectively, in terms of the length of the shortest paths between them. Clearly,~$z$ and~$z'$ are distinct. By the selection of~$x$ and~$y$, we have that $xz, xz', yz, yz', zz' \in E(G)$. By Lemma~\ref{lemma:chordal-cycle}~(2), $xz, zz', yz' \notin E(C)$.
			By Lemma~\ref{prop:common-neighbor}, there exists a vertex~$x'$ adjacent to~$x$ and~$z$, located between~$x$ and~$z$ on~$P$.
			Symmetrically, there exists a vertex~$y'$ adjacent to~$y$ and~$z'$, located between~$y$ and~$z'$ on~$P$. By the selection of~$z$ and~$z'$, we have that~$x'$ and~$y'$ are not adjacent to any vertex of~$Q$, i.e., $x',y' \notin S$.
			Moreover, it must be that $xy', x'y, x'y' \notin E(G)$, since otherwise there will be an $x$-$y$ path whose inner vertices are not adjacent to any vertex of~$Q$ and, moreover, this path plus a shortest $x$-$y$ path of $G[Q \cup \{x,y\}]$ is a hole.
			
			We claim that~$x$,~$y$, and~$z$ do not have any common neighbors in~$Q$. For contradiction, suppose they have a common neighbor~$w$ from~$Q$. Then $wz' \in E(G)$; otherwise,  $\{x,w,y,z'\}$ induces a hole. Now, $\{ z,z',x,w,y \}$ induces a $K_5 - e$, with~$xy$ being the missing edge. Then, there exists a true-twins pair of~$G$ from $\{z, z', x, w, y\}$. Notice that any true-twins pair in~$G$ is also a true-twins pair in any induced subgraph containing the true-twins pair. This restricts our focus only to the pairs $\{z, w\}$, $\{z', w\}$, and $\{z, z'\}$. The vertices~$z$ and~$w$ cannot be true twins, since otherwise the vertex next to~$z$ in~$P$ is also adjacent to~$w$, %hence in $S$.
			contradicting with Lemma~\ref{lemma:chordal-cycle}~(2). Symmetrically,~$z'$ and~$w$ cannot be true twins either.
			If~$z$ and~$z'$ are true twins, then $zy' \in E(G)$ as $z'y' \in E(G)$. However, we arrive at a contradiction that $\{x', x, w, y, y', z\}$ induces a $4$\textsc{-fan}, in which~$z$ is the degree~$5$ vertex (notice that the selection of~$z$ and~$z'$ implies that~$x', y'\not\in S$, and hence~$w$ is adjacent to neither~$x'$ nor~$y'$).
			Therefore,~$z$ and~$z'$ are not true twins, and the proof for the claim that~$x$,~$y$, and~$z$ have no common neighbors in~$Q$ is completed.

However, if $x$,~$y$, and~$z$ do not have any common neighbors in~$Q$, we can find two vertices~$v$ and~$u$ exactly the same way as we did in the proof for Case~1, and arrive at a contradiction that the graph~$G$ contains a $4$-{\textsc{fan}} as an induced subgraph. 			
		\end{proof}
		
		\bigskip
		
			\begin{proof}[Proof of Theorem~\ref{thm:fan-free-power}]
			The proof is analogous to the proof of Theorem~\ref{thm:fan-k5} provided above. First, assume for the sake of contradiction that there is a Hamiltonian $\{4\text{-}{\textsc{fan}}, \overline{A}\}$-free chordal graph which is not cycle extendable. Let~$G$ be such a graph with the minimum number of vertices. Then, let~$C$,~$x$,~$y$,~$P$, and~$S$ be defined the same as in the proof of Theorem~\ref{thm:fan-k5}. Moreover, same to the proof of Theorem~\ref{thm:fan-k5}, we know that there must be at least one inner vertex in~$P$ that is contained in~$S$. Our proof proceeds by distinguishing the following two cases.
			
			\textbf{Case 1:} there is exactly one inner vertex of~$P$ in~$S$, say~$z$.

			The proof for this case is exactly the same as the one for Case~1 in the proof of Theorem~\ref{thm:fan-k5}. (The correctness of the proof relies only on the fact that the graph in consideration is a $4$-{\textsc{fan}}-free chordal graph.)
			
			\textbf{Case 2:} there are more than one inner vertex of~$P$ in~$S$.
			
Let~$z$, $z'$, $x'$, and $y'$ be defined the same as in Case~2 in the proof of Theorem~\ref{thm:fan-k5}.		
We first claim that~$x$,~$y$, and~$z$ do not have any common neighbors in~$Q$. For contradiction, suppose they have a common neighbor $w \in Q$. Same to the proof of Theorem~\ref{thm:fan-k5}, we can show that $wz' \in E(G)$ and $xy', yx', x'y'\not\in E(G)$. We have the following subcases to consider.
			\begin{itemize}
				\item  Case~2.1: If $x'z' \in E(G)$, then $\{x', x, w, y, y', z'\}$ induces a $4$\textsc{-fan}, in which~$z'$ is the degree-$5$ vertex.
				\item  Case~2.2: If $y'z \in E(G)$, then $\{x', x, w, y, y', z\}$ induces a $4$\textsc{-fan}, in which $z$ is the degree-$5$ vertex.
				\item  Case~2.3: If $x'z',y'z \notin E(G)$, then $\{x', x, z, z', y, y'\}$ induces a~$\overline{A}$ in which~$z$ and~$z'$ are the degree-$4$ vertices.
			\end{itemize}
			Therefore,~$x$,~$y$, and~$z$ do not have any common neighbors in~$Q$.
	
If $x$,~$y$, and~$z$ do not have any common neighbors in~$Q$, we can find two vertices~$v$ and~$u$ exactly the same way as we did in Case~1 of the proof of Theorem~\ref{thm:fan-k5}, and arrive at a contradiction that the graph~$G$ contains a $4$-{\textsc{fan}} as an induced subgraph.		
		\end{proof}

\section{Concluding Remarks}\label{sec-conclusion}
In 1990, Hendry conjectured that every Hamiltonian chordal graph is cycle extendable~\cite{Hendry90}. In 2015, this conjecture was refuted by Lafond and Seamone~\cite{LafondS15} who, at the same time, proposed two new questions (Questions~1 and~2). Later, Arangno~\cite{Arangnothesis} proposed a stronger conjecture. In this paper, we refuted all these conjectures and questions by providing many counterexamples that even satisfy further conditions (Theorems~\ref{thm:counter-k-connected}). To complement these negative results, we confirmed Hendry's conjecture for Hamiltonian $4$-{\textsc{fan}}-free chordal graphs where every induced $K_5-e$ has true twins (Theorem~\ref{thm:fan-k5}), and for Hamiltonian $\{4\textsc{-fan}, \overline{A}\}$-free chordal graphs (Theorem~\ref{thm:fan-free-power}).

Our study arises two questions for future research. First, given that Hamiltonian $3$-{\textsc{fan}}-free chordal graphs are cycle extendable~\cite{Gerekthesis}, it is interesting to study whether Hendry's conjecture holds for Hamiltonian $4$-{\sc{fan}}-free chordal graphs. Second, as discussed earlier, our results imply that Hendry's conjecture holds for Hamiltonian $4$-leaf powers (Corollary~\ref{cor:4-leaf-power}), extending the results of Gerek~\cite{Gerekthesis} who showed that Hendry's conjecture holds for Hamiltonian $k$-leaf powers for all $k\in [3]$. So, another interesting question is whether Hendry's conjecture holds for Hamiltonian $5$-leaf powers. We reiterate that in general $i$-leaf powers are $(i+2)$-leaf powers, and for each $i\in [3]$, $i$-leaf powers are $(i+1)$-leaf powers. However, there are $4$-leaf powers which are not $5$-leaf powers~\cite{DBLP:journals/ipl/BrandstadtL06,DBLP:conf/isaac/FellowsMRST08}.

\bigskip

\section*{Acknowledgement} We would like to thank an anonymous referee for many useful suggestions that have improved the presentation of this paper.

%\bibliographystyle{siamplain}
%\bibliography{ref}

\end{document}